\documentclass[
  reprint,
  aps,pra,
  superscriptaddress,
  nofootinbib,
  floatfix,
]{revtex4-2}

\raggedbottom

\usepackage{amsmath,amssymb,amsfonts,amsthm,mathtools,bm}
\usepackage{physics,braket,cancel}
\usepackage{graphicx}
\usepackage{dcolumn}
\usepackage{booktabs,multirow}
\usepackage[version=4]{mhchem}
\usepackage{siunitx}
\usepackage{tikz}
\usepackage{float}
\usepackage{xcolor}
\usepackage[colorlinks,allcolors=blue]{hyperref}
\usepackage{cleveref}
\usepackage{newtxtext}

\bibliographystyle{naturemag}

\usepackage[most]{tcolorbox}

\newcommand{\bal}[1]{
\begin{align} #1 \end{align}
}

\newcommand{\baln}[1]{
\begin{align*} #1 \end{align*}
}


\def\lpar{\left(}
\def\rpar{\right)}
\def\lbra{\left[}
\def\rbra{\right]}

\def\circle{\scalebox{1.2}{\textbullet}}

\renewcommand{\braket}[1]{\left\langle #1 \right\rangle}
\renewcommand{\cref}[1]{(\ref{#1})}

\newcommand{\bea}{\begin{eqnarray}}
\newcommand{\eea}{\end{eqnarray}}
\newcommand{\beaa}{\begin{eqnarray*}}
\newcommand{\eeaa}{\end{eqnarray*}}
\newcommand{\del}[3]{\left1 3 \right2}
\newcommand{\avg}[1]{\left < #1 \right >}

\newcommand{\R}{\ensuremath{\mathbb{R}}}

\newtheorem{thrm}{Theorem}

\newtheorem{lemm}{Lemma}
\newtheorem{coro}[thrm]{Corollary}

\newcommand{\bi}{\begin{itemize}}
\newcommand{\ei}{\end{itemize}}
\newcommand{\bc}{\begin{center}}
\newcommand{\ec}{\end{center}}

\newcounter{mycomment}

\newcounter{example}
\newenvironment{example}[1][]{\refstepcounter{example}\par\medskip\noindent%
   \textbf{Example~\theexample.#1} \rmfamily}
   {\medskip\circle}

\newcounter{remark}



\def\cC{ {\cal C} }

\def\cF{ {\cal F} }

\def\cD{ {\cal D} }

\def\cT{ {\cal T} }
\def\cU{ {\cal U} }

\def\cO{ {\cal O} }

\def\hc{\text{h.c.}}

\newcommand{\argmin}[1]{\underset{#1}{\text{argmin}}}


\def\integer{\mathbb{Z}}
\def\real{\mathbb{R}}
\def\complex{\mathbb{C}}
\def\goto{\rightarrow}
\def\epsilon{\varepsilon}

\newcommand{\cL}{{\cal L}}
\newcommand{\be}{\begin{equation}}
\newcommand{\ee}{\end{equation}}

\newcommand{\bigo}[1]{{\cal O}\left({#1}\right)}

\def\hc{\text{h.c.}}
\def\ntraj{N_{\text{traj}}}


\begin{document}

\title{Path integral control of open quantum systems}

\author{Aar\'on Villanueva}
\email{aaronv@science.ru.nl}
\affiliation{Radboud University, Heyendaalseweg 135, 6525 AJ Nijmegen, The Netherlands}

\author{Hilbert Kappen}
\affiliation{Radboud University, Heyendaalseweg 135, 6525 AJ Nijmegen, The Netherlands}

\begin{abstract}
We investigate open-loop quantum state preparation for a class of open quantum systems whose dynamics follow a Gorini–Kossakowski–Lindblad–Sudarshan (GKLS) master equation that admits a trajectory-based stochastic representation.
The deterministic control objective is reformulated as a stochastic optimal control problem---interpreting stochasticity as a methodological tool akin to stochastic Schrödinger equation unravelings---which situates the problem within the path integral control framework. For the class of GKLS generators under consideration, this reformulation leads to an explicit expression for the optimal control as a weighted average over stochastic quantum trajectories, thereby eliminating the need for gradient evaluations. Building on this theoretical result, we derive a control update rule for piecewise-constant control pulses and demonstrate that adaptive importance sampling progressively enhances the control estimator during optimization, culminating in the algorithm we term Path integral Quantum Control (PiQC). We further introduce an annealed variant of PiQC, wherein a synthetic noise schedule gradually steers open-system trajectories toward closed-system dynamics, enabling high-fidelity unitary state preparation. Numerical studies on a dissipative single-qubit system and a multi-qubit Nuclear Magnetic Resonance model verify that PiQC yields precise open-loop controls and displays robustness to Hamiltonian perturbations.
We propose PiQC as a trajectory-based alternative to gradient-based approaches, which might offer a viable solution in quantum control problems where gradient computation is infeasible or computationally demanding.
\end{abstract}

\maketitle

\section{Introduction}

A number of quantum optimal control techniques have been developed for open-loop control problems, i.e. problems in which the controls are time-dependent deterministic functions that are independent of the state~\cite{cong2014control,dAlessandro2021introduction}. In the unitary regime, diverse approaches have been proposed~\cite{mahesh2023quantum}, including the gradient-based Gradient Ascent Pulse Engineering (GRAPE) algorithm~\cite{khaneja2005optimal} and its variants~\cite{chen2023accelerating}, search-based methods such as Chopped Random-Basis Optimization (CRAB)~\cite{caneva2011chopped}, variational approaches like Krotov optimization~\cite{krotov1996global,tannor1992time,reich2012monotonically,Koch2019}, and techniques derived from the Pontryagin maximum principle (PMP)~\cite{boscainIntroductionPontryaginMaximum2021}. These methods have been successfully applied to a broad range of quantum control tasks; for reviews see~\cite{glaserTrainingSchrodingerCat2015,kochQuantumOptimalControl2022}.

\begin{figure*}[!t]
\centering
\includegraphics[width=0.6\textwidth]{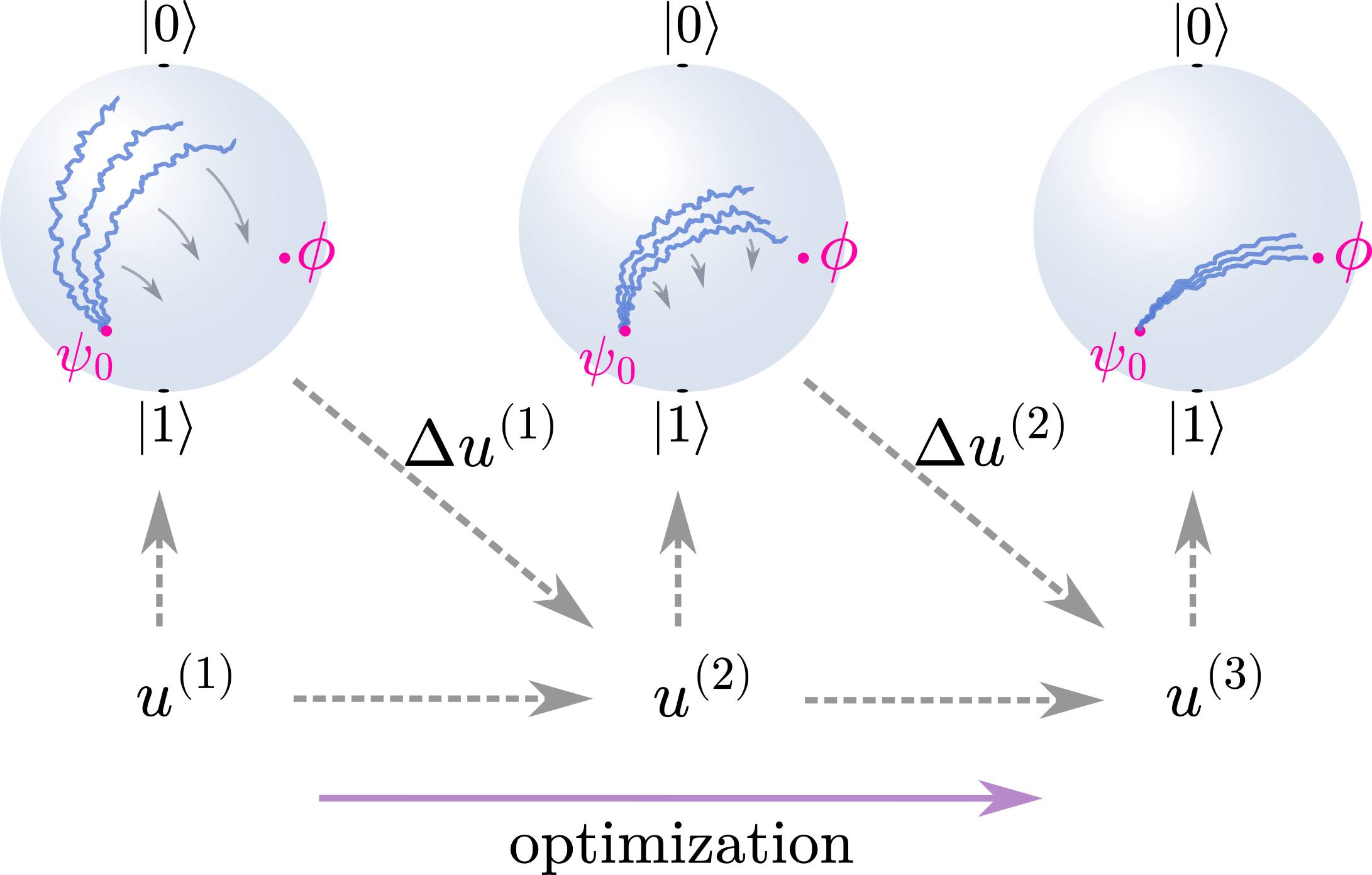}
\caption{Schematic representation of the workflow of our control algorithm for open-loop state preparation of a single qubit. Starting from an initial state $\psi_0$ and a target state $\phi$, a batch of quantum trajectories (blue solid lines) is generated under an initial control guess $u^{(1)}$. Adaptive importance sampling uses all trajectories in the batch to update $u^{(1)}$ to a new control $u^{(2)} = u^{(1)} + \Delta u^{(1)}$, such that trajectories in the next iteration move closer to the target. New trajectories are then sampled using $u^{(2)}$, and the procedure is repeated to obtain $u^{(3)},u^{(4)},\dots$. In the annealed variant of the algorithm (see Sec.~\ref{sec:qdcxclosed}), the noise strength in the trajectories is simultaneously reduced across iterations, which lowers trajectory variance and drives the average dynamics toward those of a closed system.}
\label{fig:bloch_traj}
\end{figure*}

In parallel, the control of open quantum systems has received substantial attention in recent years~\cite{Koch2016,kochQuantumOptimalControl2022}, albeit with less development than in the closed-system case.
Approaches based on direct simulation of master equations, such as the Gorini–Kossakowski–Lindblad–Sudarshan (GKLS) equation~\cite{GoriniKossakowskiSudarshan1976, Lindblad1976} (or, in short, Lindblad equation), become significantly more computationally demanding with increasing system size, since direct propagation of density matrices scales as $\bigo{N^2}-\bigo{N^3}$
in the Hilbert-space dimension.
In this context, the theory of stochastic \emph{quantum trajectories}~\cite{davies1969quantum,carmichael_open_1993,percival1998quantum,Daley2014} provides an alternative representation of open dynamics, in the presence or absence of measurements, and has found numerous applications~\cite{harrington2022engineered}.
Quantum trajectories can be viewed as noisy paths in the system’s Hilbert space that encode the interaction with the environment. In quantum control, such trajectories have been used primarily as a computational tool to approximate gradients in local optimization, benefiting from the quadratic scaling advantage of propagating wave functions of dimension $N=2^n$ instead of density matrices of dimension 
$N^2$~\cite{abdelhafez2019quantum,goerz2018efficient}.

A particularly important class of quantum trajectories are continuous diffusions described by stochastic Schr\"odinger equations (SSEs)~\cite{gisin1992quantum,percival1998quantum,barchielliQuantumTrajectoriesMeasurements2009}. From a control-theoretic perspective, SSEs suggest a correspondence between Lindblad dynamics and stochastic control problems: one may regard the deterministic open-loop state-preparation task as a control problem on the space of stochastic trajectories, where stochasticity is introduced as a methodological device in analogy with unravelings of the Lindblad equation, rather than as a physical feedback loop~\cite{wiseman1993quantum,belavkin1983theory,belavkin2013quantum,wiseman2009quantum}. In the fully general feedback setting, such stochastic optimal control (SOC) problems are typically approached via Hamilton–Jacobi–Bellman (HJB) equations or stochastic variants of the Pontryagin maximum principle formulated as backward stochastic differential equations, both of which are well known to be computationally demanding in high dimensions~\cite{pham2009continuous,yong_zhou99}. Notable exceptions include the classical linear–quadratic–Gaussian (LQG) setting, where the dynamics are linear, the running cost is quadratic in the control, and the noise enters additively. While SSEs with coherent controls are not of LQG type due to the bilinearity of terms such as $u\psi$, certain cases admit LQG formulations in the Heisenberg picture~\cite{wiseman2009quantum}, enabling rigorous treatments of measurement-based feedback control~\cite{doherty_feedback_1999,Tan2000,rossi_measurement-based_2018,magrini_real-time_2021}. In general, however, the control of open quantum systems governed by SSEs leads to nonlinear control problems for which efficient solution methods are scarce.

Within the broader SOC literature, the path integral (PI) control method~\cite{kap05a,thijssen2014a,kappen2015adaptive} was proposed to efficiently solve a large class of nonlinear problems in which the cost functional is quadratic in the control and may be nonlinear in the state, with LQG problems arising as a special case. A key advantage of the PI formulation is that the optimal control can be expressed in closed form as a path integral over dynamical trajectories in configuration space, allowing one to bypass the explicit solution of Bellman or PMP equations.
This representation in turn enables the use of well established statistical estimation techniques to approximate the optimal control.
An important example is adaptive importance sampling (IS)~\cite{kappen2015adaptive}, which iteratively refines a control estimator using self-generated trajectories driven by the current control, yielding a highly parallelizable method with modest computational overhead. PI control has demonstrated remarkable success in robotics, enabling solutions to high-dimensional, nonlinear control problems with real-time constraints where traditional methods often fail~\cite{williams2016aggressive}; see~\cite{kazim2024recent} for a recent review of PI control and related advances.

In this work we use the path integral (PI) control framework~\cite{kap05a,thijssen2014a,kappen2015adaptive} to address optimal open-loop control for a class of open quantum systems governed by the Lindblad equation.
We focus on state-preparation tasks where the dynamics admit a trajectory-based stochastic representation and reformulate the deterministic open-loop control objective as a stochastic optimal control problem on the space of quantum trajectories. In this setting, PI control yields an expression for the optimal control in terms of weighted averages over stochastic trajectories, which can be approximated by Monte Carlo sampling without computing gradients or solving Bellman equations.

Building on this connection, we derive a control update rule for parametrized open-loop controls, with an emphasis on piecewise-constant pulses. The update is obtained from global statistics of trajectory ensembles and implemented using adaptive importance sampling~\cite{kappen2015adaptive}, so that the sampling distribution and the control parameters are iteratively refined in tandem. A schematic illustration of this workflow for single-qubit state preparation is shown in Fig.~\ref{fig:bloch_traj}.
We consider controls that can be written as linear combinations of prescribed basis functions, which includes the common case of piecewise-constant pulse parametrizations. In this work, we use “pulse” to refer to the time-dependent profile of an external control field, parametrized as piecewise constant over a finite time grid.

Our use of quantum trajectories differs from previous work where unravelings are employed primarily to approximate gradients for deterministic optimal control governed by the Lindblad equation~\cite{boutin2017opengrape,goerz2018efficient, schmidt2011optimal}.
There, the trajectories serve as a numerical tool for gradient estimation.
We instead treat the trajectory description as the primary control space and formulate a stochastic control problem whose solution is approximated by adaptive importance sampling, leading to a gradient-free optimization scheme.
We refer to the resulting algorithm as Path integral Quantum Control (PiQC) and study it in the open-loop setting. We present two main variants: an “open-system” version applied directly to dissipative dynamics, and an “annealed” version in which a synthetic noise schedule interpolates between noisy and nearly unitary trajectories, yielding high-quality controls for unitary state preparation. Numerical examples include a dissipative single-qubit problem, where we compare to an open-system GRAPE implementation, and a multi-qubit Nuclear Magnetic Resonance (NMR) model used as a testbed for both open and nearly closed dynamics.

The remainder of the paper is organized as follows.
Section~\ref{sec:qcontrol} defines the quantum control problem for Lindblad dynamics.
Section~\ref{sec:pi_control} reviews the PI control formalism, defines the control model and notions of optimality, and introduces the adaptive importance sampling algorithm needed for the quantum control part.
Section~\ref{sec:piqc} maps the deterministic quantum control problem to a stochastic setting where PI control can be applied, and defines the PiQC algorithm for open systems, and a annealed variant for closed systems.
Section~\ref{sec:numerics} presents numerical experiments for single and multi-qubit systems, and Sec.~\ref{sec:discussion} discusses limitations and directions for future work.


\section{Quantum control problem definition}
\label{sec:qcontrol}

In what follows, quantum states are denoted by Greek letters such as $\psi$ and $\phi$, with their inner products written explicitly in the standard linear-algebraic form, e.g. $\psi^\dagger \phi$.

We define the state preparation problem as follows.
We aim to prepare a target quantum state starting from a given initial state in total time $T$, accounting for interactions between the system and its environment.
We assume the system dynamics follows the Lindblad equation~\cite{breuer2002theory}
\begin{gather}
\label{eq:lindblad}
\dot{\rho} = -i[H,\rho] + \cD[D', \cC'] (\rho)
\notag \\
\cD[D', \cC'] (\rho) := D'_{ab} \lpar C'_a\rho {C'_b}^\dagger -\frac{1}{2}\{ {C'_b}^\dagger C'_a, \rho \} \rpar
\end{gather}
where, from now on, we adopt the convention of repeated indices for summation.
The Hamiltonian $H$ takes the form $H(t)=H_0 + u_a(t)H_a$ with $H_a(a=1,\ldots, n_c)$ Hermitian operators (or control Hamiltonians), and $u_a(t)(a=1,\ldots, n_c)$ real-valued control fields.
The noise matrix $D$ is positive symmetric, and the dissipators $\cC' = \{C'_a\}_{a=1}^{n_c}$ satisfy a property to be determined.

Given the total time interval $[0,T]$, the cost objective is defined as
\bal{\label{eq:det_cost}
C_{\text{det}}[u]=-\frac{Q}{2}\cF(\rho_T)+ \frac{1}{2} \int_0^T  u(t)^T R u(t) dt
}
where $\rho_T$ is the density operator at final time $T$, $Q>0$ and $R$ is a real symmetric positive $n_c\times n_c$ matrix.
The first term corresponds to the overlap $\cF(\rho_T)=\Tr(\rho_T \phi\phi^\dagger)$ of the final state $\rho_T$ with respect to a target (pure) state $\phi$.
The second term, where $u(t)$ is the vector of controls $u_a(t)$, corresponds to the fluence, which penalizes the integrated amplitude of the control fields.
The task is to find the optimal open-loop control $u^*(t) \in \cU_o$, given a suitable open-loop control space $\cU_o$, that minimizes the objective function~\cref{eq:det_cost}, i.e.
\bal{\label{eq:det_prob}
u^* = \argmin{u \in \cU_o}\, C_{\text{det}}[u]
}

The Lindblad (GKLS) equation~\cref{eq:lindblad} and the cost~\cref{eq:det_cost} define a deterministic control problem.
It can be approached using, for example, the PMP method~\cite{boscainIntroductionPontryaginMaximum2021}.
Here, instead, we propose to use PI control.
Since PI control is a stochastic optimal control method, we first need to establish a suitable mapping between the deterministic setting given by \cref{eq:det_cost} and \cref{eq:det_prob}, and a stochastic control problem.
Before doing so, we formally introduce the PI theory and the tools needed to apply it to the quantum case.


\section{The path integral control method}
\label{sec:pi_control}

We review the path integral control theory~\cite{kap05a} which in its most general form yields the computation of optimal feedback control solutions.
For the presentation we follow Ref.~\cite{thijssen2014a}, which proves the main results behind the PI control method and importance sampling.
We apply these results to derive the equations needed for the importance sampling step used later in the text.
This section is meant to be self-contained, but the reader may consult~\cite{thijssen2014a} for further details.

\subsection{Introduction and mathematical definitions}

The path integral control framework is naturally expressed in the language of stochastic calculus using the Ito convention~\cite{oksendalStochasticDifferentialEquations2013}.
Consider a finite time interval $[t_0, T]$, where $t_0$ is the initial time and $T$ the final time.
Let $W_t$ denote an $n_c$-dimensional real-valued stochastic process with vanishing mean and variance proportional to the elapsed time $t - t_0$.
Such processes are referred to as Wiener processes, or equivalently Brownian diffusions~\cite{oksendalStochasticDifferentialEquations2013}.
Intuitively, one may regard a Wiener process as a temporal sequence of independent Gaussian increments $d W_t$, each of which with zero mean and variance proportional to $dt$.
In differential form, the Wiener increments satisfy $\braket{dW_t} = 0$ and $\braket{dW^a_t dW^b_t} = D_{ab} dt$, where $D$ is a time-independent real, symmetric, positive semidefinite matrix, and the expectation $\braket{\cdot}$ is taken over all sample paths of $W_t$ on $[t_0, T]$.
For compactness, we often write $\braket{dW_t dW_t^T} = D dt$, and refer to this as the Ito structure of the process.
When needed, we will write $\braket{\cdot}_{X_t = x}$ to denote expectation over trajectories conditioned on $X_t = x$ at time $t$.

Consider a dynamical system of the form
\bea \label{eq:PI1}
dX_t = f(t,X_t)dt + g(t,X_t)(u(t,X_t) dt + dW_t)
\eea
for $t_0 \le t \le T$ with $X_t$ an $M$-dimensional real stochastic variable, with initial condition $X_{t_0}=x_0$.
Here $W_t$ is a $n_c$-dimensional real Brownian motion with $\avg{dW^a_t}=0$ and $\avg{dW^a_t dW^b_t}=D_{ab} dt$ with $D$ a $n_c\times n_c$ positive symmetric matrix.
Define the spacetime domain $\Omega = [t_0,T]\times \R^M$.
Define the functions $f: \Omega \to \R^M$, $g: \Omega \to \R^{M\times n_c}$ and the feedback control $u : \Omega \goto \real^{n_c}$ with the sufficient regularity such that a solution to \cref{eq:PI1} exists.
The reader may consult~\cite{bensoussan2018estimation} for further details on these conditions.
Denote by $\cU$ the set of admissible feedback controls $u$.
Define the cost
\footnote{
Note that the expectation of the It\^o term $\int_t^T u(s,X_s)^T R dW_s$ vanishes because the control is adapted and the integral therefore defines a martingale~\cite{oksendalStochasticDifferentialEquations2013}.
However, we keep it in the definition for the sake of completeness and notational convenience, as this term appears in the definition of the importance sampling weights
when deriving the optimal solution of the path integral theory~\cite{thijssen2014a, kappen2015adaptive}.
}
\begin{widetext}
\bal{\label{eq:PI2}
S^u(t)= \Phi(X_{T})+ \int_t^{T} \left[ V(s,X_s) ds+\frac{1}{2} u(s, X_s)^T R u(s,X_s) ds  + u(s,X_s)^T R dW_s \right]
}
\end{widetext}
with $X_t$ the stochastic solution of \cref{eq:PI1} at time $t$.
The matrix $R$ is positive $n_c \times n_c$ symmetric such that $R=\lambda D^{-1}$ with $\lambda>0$ a constant.
The function $V$ is a state-dependent path cost, and $\Phi$ denotes the end cost.

The stochastic optimal control problem consists in finding the optimal feedback control $u^*(t, x) \in \cU$ that minimizes the expected cost
\bal{\label{eq:expected_cost}
C[u] = \braket{S^u(t_0)}
}
with initial condition $X_{t_0} = x_0$ at time $t_0$, in symbols
\bal{\label{eq:u*_def}
u^* = \argmin{u \in \cU}\, C[u]
}
We say that the stochastic differential equation \cref{eq:PI1} is in PI form or PI-compatible, and together with \cref{eq:expected_cost} and \cref{eq:u*_def} they define the PI control problem.
The generic approach to solve this would be to derive a partial differential equation for the so called cost-to-go, known as Bellman equation,
and derive the optimal control from the solution (see Lemma~\ref{lemma1} in the Supplementary Material).
However, for the class of PI-compatible control problems defined by Eqs. \cref{eq:PI1}, \cref{eq:PI2} and \cref{eq:u*_def}, one bypasses this requirement by establishing the following result.
\begin{thrm}
\label{thm1}
The optimal expected cost $C^* = C[u^*]$ and optimal control $u^*$ for the control problem \cref{eq:u*_def} with dynamics \cref{eq:PI1} and cost \cref{eq:expected_cost} is given by
\bal{\label{eq:opt_J}
C^* = -\lambda \log \avg{e^{-S^u(t_0)/\lambda}}
}
\bal{\label{eq:opt_u}
u^*(t,x)=u(t,x) + \lim_{dt \to 0}  \frac{\braket{\omega^u_t \Delta W_t}_{X_t = x}}{dt} \quad \forall (t, x) \in \Omega
}
$\omega^u_t := \frac{e^{-S^u(t)/\lambda}}{\braket{e^{-S^u(t)/\lambda}}_{X_t = x}}$, $u(t, x)\in \cU_t$ is an arbitrary control sampler, and $\Delta W_t = W_{t + \Delta t} - W_t$.
\end{thrm}
\begin{proof}
The proof is given in \cite{thijssen2014a} and reproduced in the Supplementary Material.
\end{proof}
Theorem~\ref{thm1} gives an explicit solution of the optimal expected cost $C^*$ and the optimal control for any $(t,x) \in \Omega$ in terms of path integrals, i.e. as weighted expectations over $u$-controlled trajectories traced in the spacetime $\Omega$.
This implies that the optimal control solution $u^*$ can be approximated using known sampling methods.
In this regard, we make an important remark.
\remark{
The control function $u$ in \cref{eq:opt_J} and \cref{eq:opt_u} is referred to as the sampling control or, in short, the sampler.
Note that the l.h.s. of \cref{eq:opt_J} and \cref{eq:opt_u} are {\it independent} of $u$: the same result holds for any sampler $u$, in particular, for $u(x,t)=0$.
However, naive sampling strategies can yield estimates with large statistical variance, for which the accuracy of one-step sampling estimates of $u^*$ heavily relies on a relatively good choice of the sampler $u$.
}
\bigskip

In practice, we must approximate the expectations in \cref{eq:opt_J} and \cref{eq:opt_u} using a time discretization and a finite number of samples.
Denote by $S^u_i(t)$ the cost of the $i$-th sample trajectory $X_{t:T}$ generated by a particular noise realization $W_{t:T}$.
Denote with $\ntraj$ the number of noise realizations/trajectories.
Each trajectory is weighted by $\omega^i_t=\frac{e^{-S^u_i(t)/\lambda}}{\frac{1}{\ntraj}\sum_{j=1}^{\ntraj} e^{ -S^u_j(t)/\lambda}}$.
When $S^u_i(t)$ has large variance, the batch of samples is dominated by one sample: the sample with lowest $S^u_i(t)$, in particular for low $\lambda$ (i.e. low noise).
The quality of the sampling can be quantified by the effective sample size~\cite{thijssen2014a}
\bea
ESS := N_\text{eff} / \ntraj \qquad  N_\text{eff} := \left(\sum_{i=1}^{\ntraj} (\omega_t^i)^2\right)^{-1}\qquad \label{eq:Neff}
\eea
where $1/\ntraj \le ESS \le 1$ and $ESS=1$ is maximal when $N_\text{eff}=\ntraj$ with all samples having equal weight $\omega_t^i=1/\ntraj$.
In~\cite{thijssen2014a}, it was shown that sampling controls with lower control cost~\cref{eq:PI2} are better samplers in the sense of smaller statistical variance and larger $ESS$.
It turns out that the optimal sampler $u$, in the sense of minimal variance, is given by the optimal control solution $u^*$, for which the variance of $S^u(t)$ is zero, and $\omega_t^i =1/\ntraj$, $ESS=1$.
We make an important remark.
\remark{
Any sampler $u$ induces an unbiased estimator for $C^*$ and $u^*$, but with different quality in terms of statistical variance.
When the sampler $u= u^*$, the variance of $S^u(t)$ becomes identically zero, which further implies that $\braket{S^{u^*}(t)} = S^{u^*}(t)$ and shows the self-consistency of \cref{eq:opt_u}, since $\braket{\omega^u_t \Delta W_t} = \omega^u_t \braket{ \Delta W_t} = 0$.
}
\bigskip

The remark above justifies the use of adaptive importance sampling for control estimation, as we discuss in Sec.~\ref{sec:IS}.

Finally, the quality of the control solution is in principle given by the total expected cost $\braket{S^u(t_0)}$.
Because of the monotonic relation between control cost and effective sample size~\cite{thijssen2014a}, one can also measure the quality of the control solution in terms of $ESS$ instead of $\braket{S^u(t_0)}$, which turns out to be a more sensitive measure of optimality.
See Sec.~\ref{sec:noisy-qubit}.


\subsection{Parametrized controls and optimality}
\label{sec:param_control}

The practicality of Equation~\cref{eq:opt_u} is limited to the computational complexity of estimating the optimal control $u^*(t, x)$ for each spacetime point $(t, x)\in \Omega$.
For high-dimensional problems, this task can turn challenging due to the curse of dimensionality.
Because of this, we must resort on a different strategy to approximate $u^*$.
One strategy is to assume that the control can be described by parametrized models in terms of known functions.
Here, we consider linear parametrizations in terms of basis functions.
Define a control model $\hat{u}$ as a linear parametrization of the form
\bal{\label{eq:linear_param}
\hat{u}_a(t,x) = \sum_{k\in I} \hat{A}_{ak} h_k(t, x)
}
for $a=1, \dots, n_c$, where $\hat{A}_{ak}$ are time-independent constants and $h_k:  \Omega \to \R$ are fixed basis functions.
The symbol $I$ denotes the index set for labeling basis functions.

For the moment, assume that the optimal control $u^*$ and the sampler control $u$ in \cref{eq:opt_u} belong to the Span of $\{h_k\}_{k\in I}$, i.e. $u^*$ and $u$ can be written in the form~\cref{eq:linear_param} as
\bal{\label{eq:linear_param_u_opt}
u^*_a(t,x) = \sum_{k\in I} A^*_{ak} h_k(t, x), \quad u_a(t,x) = \sum_{k\in I} A_{ak} h_k(t, x)
}
for $a=1, \dots, n_c$, and some $A^*_{ak},\, A_{ak} \in \real$.
Under this assumption, one can derive, using Theorem~\ref{thm2} from the Supplementary Material with $f(t,x)=h_{k'}(t,x)$, the following matrix equation
\bal{\label{eq:PI11}
\sum_k (A^*_{ak} - A_{ak}) B_{kk'} = C_{ak'}
}
with 
\bal{
B_{kk'}&=\avg{e^{-S^u(t_0)/\lambda}\int_{t_0}^T h_k(t,X_t)h_{k'}(t,X_t) dt} \notag\\
C_{ak} &=\avg{e^{-S^u(t_0)/\lambda}\int_{t_0}^T h_k (t,X_t)dW^a_t}\label{eq:PI11a}
}

Solving \cref{eq:PI11} for $\hat{A}$ in terms of the matrices $A, B$ and $C$ gives
\bal{ \label{eq:matrix_sol_1}
A^* = A + C B^{-1}
}
where we assumed $B$ to be invertible.

We remark that~\cref{eq:matrix_sol_1} can be derived as a particular case of Corollary 4 in \cite{thijssen2014a}.
Equation \cref{eq:matrix_sol_1} shows that when $u^*$ is of the parametrized form~\cref{eq:linear_param_u_opt}, $u^*$ can be approximated in terms of statistics $C, B$, which in turn can be computed once for a single batch of $\ntraj$ noise realizations $W_{t_0:T}$.
This is in stark contrast to the case of direct estimation of $u^*$, where we must solve for $u^*(t, x)$ for each $(t, x) \in \Omega$, which implies a sampling complexity $\sim \ntraj L$, with $L$ the total number of spacetime points in the discretization of $\Omega$.

What if $u^*$ does not belong to the Span of $\{h_k\}_{k\in I}$? In this case, one can prove that the solution given by \cref{eq:matrix_sol_1} is optimal in a precise sense.
A complete explanation of this case would take us farther than the scope of this work, therefore we will limit ourselves in briefly articulating the main ideas, while referring the reader to~\cite{kappen2015adaptive} for further details.
It turns out that the optimal control problem stated in \cref{eq:u*_def} can be formulated as a Kullback-Leibler divergence minimization problem between the distribution $p^*$ over state trajectories $X_{t_0:T}$ controlled by the optimal feedback control $u^*$, and the distribution $p^u$ over trajectories controlled by $u$~\cite{kappen2015adaptive}.
In symbols, we want to minimize $\text{KL}(p^* | p^u)$.

The problem of minimizing $\text{KL}(p^* | p^u)$ is convex in $p^u$, and the optimal solution is reached when $u = u^*$.
This corresponds to the unconstrained case in which the search space for $u$ is the whole $\cU$.
If the search space for $u$ is constrained, such as in the linear parametrization given by \cref{eq:linear_param}, then the solution \cref{eq:matrix_sol_1} is optimal not in the $\text{KL}(p^* | p^u)$ divergence sense, but in the flipped version $\text{KL}(p^u | p^*)$.
Obviously, minimizing the KL is not the same problem as minimizing its flipped version, but the two solutions approach each other when $u^*$ is sufficiently well approximated by a parametrized solution of the form \cref{eq:linear_param}, because the difference $\text{KL}( p^* | p^u) - \text{KL}(p^u | p^*) = \bigo{\delta p^3}$ with $\delta p = p^u - p^*$.
This is easy to see by Taylor-expanding the difference around $p^*$.
In analogy with classical mechanics, in this formulation the problem of finding the optimal control to the SOC problem can be seen as a variational problem where the KL divergence plays the role of an action.

The approach of minimizing the flipped version of the KL divergence is generally known in the literature as the Cross-Entropy (CE) method~\cite{de2005tutorial}, and it is commonly used as a proxy for approximating the solution to the original problem.
Along the text, we will adopt this point of view and frame the optimality of the solution given by \cref{eq:matrix_sol_1} within the context of minimizing the KL divergence $\text{KL}(p^u | p^*)$.

\subsection{Adaptive importance sampling and open-loop control}\label{sec:IS}

Using \cref{eq:matrix_sol_1} we can build an optimization scheme in which the control amplitudes $A$ are iteratively improved starting from an arbitrary guess.
One can iteratively apply~\cref{eq:matrix_sol_1} using the $A$ that is computed at iteration $p$ as the sampler for iteration $p+1$.
This is called adaptive importance sampling and the update step in the optimization takes the form
\bal{\label{eq:PI13}
A^{(p+1)}=A^{(p)} + C^{(p)} (B^{(p)})^{-1}
}
where $A^{(p)}, C^{(p)}, B^{(p)}$ are the statistics computed with $u^{(p)}$ specified by $A^{(p)}$ in the previous step.
We can initialize the algorithm with e.g. $A^{(0)}=0$, or with smarter initializations\footnote{For instance, with the solution of the corresponding deterministic control solution with $D=0$.}.

In this work, we are interested in open-loop control parametrizations, i.e. parametrizations of $u$ that are independent of the state $x$.
In particular, motivated by quantum control applications, we focus on piecewise-constant open-loop controls (often implemented as sequences of control pulses in experiments).
Consider a partition of the time interval $[t_0, T]$ in $K$ sub-intervals $I_k=[\tau_{k-1}, \tau_k]$ such that $t_0=\tau_0,\, \tau_1, \ldots, \tau_K=T$.
A piecewise constant open-loop parametrization can be defined by choosing the basis functions $h_k$ to adopt the form of indicator functions, $h_k(t) = \delta_{t\in I_k}$, where $\delta_{t\in I_k} = 1$ if $t\in I_k$, and zero otherwise.
The adaptive importance sampling equation \cref{eq:PI13} takes the form
\bal{\label{eq:open_loop_sol}
A^{(p+1)}_{ak}=A^{(p)}_{ak} + \avg{\omega_p \frac{\Delta W^a_k}{\Delta \tau_k}}_p
}
for $k=1,\ldots, K$, where
\bal{\label{eq:weights}
\omega_p := \frac{e^{-S^{u}(t_0)/\lambda}}{\avg{e^{-S^u(t_0)/\lambda}}_p}
}
and $\Delta W^a_k := \int_{I_k} dW^a_t$, $\Delta \tau_k := \tau_k - \tau_{k-1}$.
The average $\avg{\cdot }_p$ is computed with respect to trajectories generated by $u^{(p)}$ with initial condition $x_0$ at time $t_0$.

\section{Path integral Quantum Control}\label{sec:piqc}

\subsection{Stochastic mapping of the control problem}\label{sec:stoch}

Consider the controlled stochastic Schrödinger equation
\bal{\label{eq:control_linear}
d\psi_t = & -iH_0\psi_t dt -\frac{1}{2} D_{ab} H_aH_b \psi_t dt 
\notag \\ &-i H_a \psi_t (u_a(t) dt +dW_t^a)
}
where the $dW_t^a$ are real Wiener increments with zero mean and variance structure $\braket{dW_t dW^T_t} =D dt$, and $\braket{\cdot}$ denotes the average over all noise realizations.
The matrix $D$ is real symmetric and its relation with PI control and the quantum control problem is clarified below.
The open-loop control $u \in \cU_o$, and the Hermitian operators $H_a$, are defined as in \cref{eq:lindblad}.
Consider the expected cost
\par\vspace{-\baselineskip}%
\begin{widetext}
\bal{\label{eq:stoch_cost}
C[u]=\avg{-\frac{Q}{2}\cF(\psi_T)+ \frac{1}{2} \int_0^T u(t)^T R u(t) dt + \int_0^T u(t)^T R dW_t}
}
\end{widetext}
\vspace{-\baselineskip}\par%
where $\psi_t$ is the quantum state satisfying the SSE~\cref{eq:control_linear}, $\cF(\psi_t) :=\Tr(\psi_t\psi_t^\dagger \phi\phi^\dagger)$, and $\braket{\cdot}$ is the expectation over all quantum trajectories generated by \cref{eq:control_linear} with initial state $\psi_0$ at time $t=0$.
Note that the It\^o term $\int_0^T u(t)^T R dW_t$ vanishes under the expectation, but we include it here as we will need it for PI control---see discussion around~\cref{eq:PI2} in Sec.~\ref{sec:pi_control}.
Define the SOC problem
\bal{\label{eq:soc_prob}
u^* = \argmin{u \in \cU_o}\, C[u]
}

Finally, define the density operator $\rho(t) := \braket{\psi_t \psi_t^\dagger}$.
Then, the SOC problem given by \cref{eq:soc_prob} and the deterministic problem given by \cref{eq:det_prob} are equivalent in the sense that $C_\text{det}[u] = C[u]$ for all $u \in \cU_o$, provided that the dissipators $\cC'$ are related to the operators $C_a := -iH_a$ through the following linear transformation
\bal{\label{eq:invariant}
C_b=C'_a G_{ab}\qquad D'=G D G^\dagger 
}
where $G$ is an invertible matrix $G \in \complex^{n_c \times n_c}$, and $D'$ is the real symmetric appearing in the Lindblad equation~\cref{eq:lindblad}.
The proof follows directly by recognizing that~\cref{eq:control_linear} is an unraveling of the Lindblad equation with dissipation operator $\cD[D, \cC]$---this connection is well-established in the literature~\cite{barchielliQuantumTrajectoriesMeasurements2009} and we include a derivation in Appendix~\ref{app:unraveling} for completeness---combined with the fact that the dissipation operator remains unchanged under transformations of the form~\cref{eq:invariant}, i.e. $\cD[D, \cC] = \cD[D', \cC']$.

Note that the equivalence between the SOC formulation and the deterministic problem only holds in the case where $u$ is an open-loop control.
When $u$ is a feedback control, the mapping does not hold because control and state become correlated.

The stochastic mapping enables the use of path integral control techniques to address the quantum control problem by showing that~\cref{eq:control_linear} is PI-compatible; that is, it can be written in the form of \cref{eq:PI1}.~\footnote{
To see this, note that the dynamics in~\cref{eq:control_linear} can be written in the form
$$
d\psi_t =f(\psi_t) dt+g_a(\psi_t) (u_a(t) dt+ dW^a_t)
$$
with $f$ and $g_a$ complex valued vector functions.
Split the equation in real and imaginary components and define $X_t = (\psi_t^R, \psi_t^I)^T$, $F(\psi_t) = (f(\psi_t)^R, f(\psi_t)^I)^T$, $G_a(\psi_t) = (g_a(\psi_t)^R, g_a(\psi_t)^I)^T$.
Then, equation
$$
dX_t =F dt+G_a (u_a(t) dt+ dW_t^a)
$$
is PI-compatible.
}.
In summary, we can solve the deterministic problem \cref{eq:det_prob} using PI control provided that the condition $R = \lambda D^{-1}$ with $\lambda > 0$ is satisfied and the dissipators $\cC'$ are linearly related to anti-Hermitian operators through transformation~\cref{eq:invariant}.
Dissipation processes obeying this property arise naturally in practical settings, including emission and absorption channels $\cC' = \{\sigma^+, \sigma^-\}$ (see Secs.~\ref{sec:noisy-qubit} and \ref{sec:nmr} for representative examples), as well as dephasing noise with $\cC' = \{\sigma_z\}$; more broadly, they also include Pauli-type depolarizing noise, where $\cC'$ is drawn from the Pauli operator set.
All these scenarios can be mapped to anti-Hermitian dissipators.
A limitation, for example, arises for purely one-directional channels (e.g., zero-temperature amplitude damping or generalized amplitude damping~\cite{nielsenQuantumComputationQuantum2010}), which do not satisfy the condition without augmentation.
While the anti-Hermitian condition defines the range of applicability of the control framework developed here, in Section~\ref{sec:discussion} we discuss possible pathways to circumvent such a limitation.


\subsection{Algorithm definition for open systems}

To finalize the definition of the diffusion algorithm, we must specify a suitable open-loop search space $\cU_o$.
As for quantum control applications, we are interested in pulse-based control.
To this end, we choose $\cU_o$ to be the space of piecewise constant functions as defined in Sec.~\ref{sec:IS}.
We refer to this algorithm as Path integral Quantum Control and summarize it as follows.
Set an initial sampler $u$.

\begin{enumerate}
\item Sample $\ntraj$ noise trajectories $W_{0:T}$
\item Generate $\ntraj$ state trajectories $\psi_{0:T}$ using \cref{eq:control_linear} and record the final states $\psi_T$
\item Compute $\ntraj$ weights $\omega$ using \cref{eq:weights}
\item Update $u$ using \cref{eq:open_loop_sol}
\item Repeat
\end{enumerate}

In practice, due to the statistical nature of the algorithm, the controls can exhibit high variance.
To reduce this variance and improve convergence, we add a smoothing procedure in Step~4.
This is done as follows.
In Step~4, instead of estimating the next control $u^{(p+1)}$ using $u^{(p)}$, use a recent average of controls $u^{(i)}$ computed over a window of past importance sampling steps.
Operationally, this means replacing $A_{ak}^p$ on the right hand side of~\cref{eq:open_loop_sol} (in both terms) by the average $\frac{1}{w}\sum_{p'=p-w+1}^p A_{ak}^{p'}$, where $w$ is a user-defined window size.
This smoothing ansatz can improve the quality of solutions at the cost of slowing down the convergence.


\subsection{Annealed variant for closed systems}\label{sec:qdcxclosed}

We propose a variant of the {PiQC} algorithm for approximating optimal controls for closed systems.
We emphasize that the annealed variant of PiQC introduced in this section is proposed as a heuristic gradient-free algorithm, 
and no claim of optimality is made for the closed-system case.
For settings in which gradients are available, established methods such as GRAPE, Krotov, or PMP-based approaches remain standard alternatives and may be preferable.
The numerical evidence presented in Section~\ref{sec:nmr} demonstrates that the heuristic performs well in the benchmarks considered, but a rigorous convergence 
analysis in the unitary limit is beyond the scope of this work and is left for future investigation.

Consider the deterministic control problem with dynamics given by the Liouville equation resulting from setting the dissipation term in \cref{eq:lindblad} to zero.
This can be achieved e.g. by setting the noise matrix $D$ to zero.
More explicitly, consider the closed dynamics
\bal{\label{eq:liouville}
\dot{\rho} = -i[H_0 + u_a H_a, \rho]
}
together with the cost objective \cref{eq:det_cost}.
If the initial state $\rho_0 = \psi_0 \psi^\dagger_0$, \cref{eq:liouville} describes a purely unitary evolution.

\begin{figure*}[!t]
    \begin{center}
\includegraphics[width=0.6\textwidth]{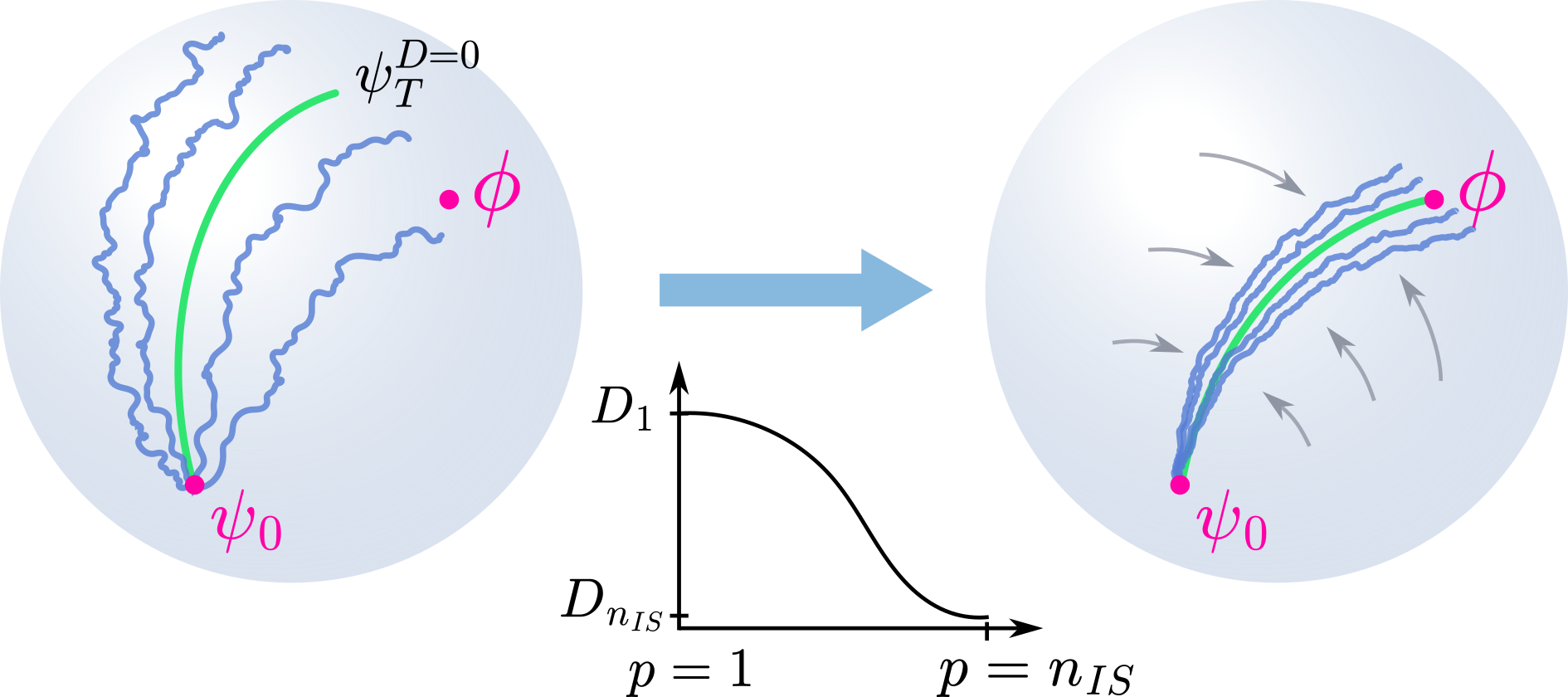}
    \end{center}
    \caption{Schematic representation of the annealed diffusion control algorithm process.
    In early stages of the optimization, the amplitudes of the noise matrix $D_p$ are relatively high and the trajectories (solid blue lines) present high variance with respect to the solution corresponding to $D=0$ (green solid line).
    By the end of the optimization, the noise amplitudes are very small, and the trajectories tend to align with the trajectory corresponding to $D=0$ due to a reduction of variance, while reaching the target $\phi$.
}
    \label{fig:bloch_anneal}
\end{figure*}

We can approach the problem of minimizing the cost objective above given dynamics \cref{eq:liouville} using the PiQC algorithm by modifying the noise matrix $D$ across optimization steps, such that $D \goto 0$ at the end of the optimization.
Intuitively, this means approaching the closed dynamics by progressively switching off the interactions with the environment.

More concretely, consider a series of noise matrices
\baln{
D_1, D_2, \ldots, D_{n_{IS}}
}
with $n_{IS}$ the number of importance sampling steps.
The noise matrices are different instances of the noise matrix that defines the variance structure in Eq.~\cref{eq:control_linear}.
Recall that, for the PI control method to be applicable, the relation $R = \lambda D^{-1}$ should be satisfied.
Then, $D_{n_{IS}}$ cannot vanish.
In consequence, we will choose $D_{n_{IS}}$ such that $\|D_{n_{IS}}\| = \epsilon$ with $\epsilon$ sufficiently small but not zero.
The set $\{D_p\}$ will start from relatively high noise strength values at $p=1$ and will gradually reduce such that $\|D_{n_{IS}}\| = \epsilon$.
We denote such a set $\{D_p\}$ as a control annealing schedule or simply, annealing schedule.

Another important point is that, throughout the annealing, the matrix $R$ is held fixed and the parameter $\lambda$ is updated at each step as $\lambda_p = R D_p$, consistent with the PI condition $R = \lambda D^{-1}$.
The annealing schedule is thus fully specified by the sequence $\{D_p\}$ alone.

We note that this annealing procedure is distinct from quantum annealing methods used for combinatorial optimization; here, annealing refers to a schedule on synthetic noise strength.
We use the term {\it annealing} drawing from the fact that the progression from high-noise values at the beginning of the optimization towards low-noise values at the end of it resembles the process of {\it cooling down} in classical annealing.
At $p=1$, we start with very noisy quantum trajectories in the Hilbert space induced by high values of the noise strength $D_1$.
As the optimization progresses, these trajectories become less noisy, decreasing their relative variance, due to a weaker interaction with the environment induced by a smaller noise strength $D_p$.
At the end of the optimization at $p=n_{IS}$, the quantum trajectories tend to coalesce into a single trajectory dictated by the unitary dynamics \cref{eq:liouville}.
In this sense, we {\it cool down} the stochastic trajectories to finally merge into a single unitary evolution.
We remark that the analogy with classical annealing is not related to the temperature of the device, nor to any type of physical temperature.
A picture of temperature arises only from the use of synthetic noise induced by the schedule $D_p$, which controls the amplitude of the noise manifested in the quantum trajectories.

If a good enough schedule $\{D_p\}$ is chosen, the algorithm will anneal towards a solution $u^{(n_{IS})}$ that is a good approximation to the true solution of the unitary problem given by \cref{eq:liouville} and \cref{eq:det_cost}.
In this variation, PiQC can be regarded as a quantum‑control analog of classical annealing methods.
We illustrate this process in Fig.~\ref{fig:bloch_anneal}.


\section{Numerical experiments}
\label{sec:numerics}

In this section we apply the PiQC algorithm to several examples to showcase its effectiveness.
In what follows, we assume open-loop control parametrizations in the form of piecewise constant control pulses over $K$ time intervals.
We parametrize the controls according to~\cref{eq:linear_param}, and write
\bal{\label{eq:openloop}
u_a(t) = \sum_{k=1}^{K} u_{ak} \delta_{t\in I_k} \quad a=1, \ldots, n_c
}
with $I_k = [\tau_{k-1}, \tau_k]$, $u_{ak}$ the $k$-th pulse corresponding to control $u_a$, and $n_c$ the total number of controls.
To simulate the stochastic dynamics~\cref{eq:control_linear}, we use the Euler-Maruyama integration scheme.
We made the code implementation supporting this section publicly available at~\cite{github_piqc}.


\subsection{Control of a noisy qubit} \label{sec:noisy-qubit}
Consider a single-qubit system evolving according to the Lindblad equation \cref{eq:lindblad} with $H=u_x H_x + u_y H_y$, where $H_x = \sigma_x$ and $H_y=\sigma_y$.
The dissipation part is given by the two non-Hermitian operators $C'_1 =\sigma^+$ and $C'_2 =\sigma^-$, where $\sigma^\pm = \frac{1}{2}(\sigma_x \pm i \sigma_y)$.
This system is commonly used for modeling the emission and absorption of light quanta in a two-level system coupled to an electromagnetic field, e.g. a cavity resonator~\cite{Daley2014}.
Assume a diagonal noise matrix $D'_{ab}=D \delta_{ab}$ with $D$ real positive.
The Lindblad equation is
\bal{\label{eq:lindblad_1qubit}
\dot \rho =-i[H, \rho]  + D  \lpar \sigma^+ \rho \sigma^- + \sigma^- \rho \sigma^+ -\rho \rpar
}
To propose an unraveling suitable for PI formulation, we  transform the dissipators to a pair of anti-Hermitian operators using~\cref{eq:invariant} with $C_1=-i H_x$, $C_2=-iH_y$,
\bal{\label{eq:one-qubit-invariance}
G = 
\begin{pmatrix}
-i &  -1 \\
-i & 1
\end{pmatrix} 
}
The unraveling~\cref{eq:control_linear} becomes
\bal{\label{eq:one-qubit-psi}
d\psi_t&=-D\psi_t dt -i \sum_{a=x, y} \sigma_a \psi_t (u_a(t) dt +dW_t^a) 
}
As stated in Sec.~\ref{sec:stoch}, unraveling \cref{eq:one-qubit-psi} is norm-preserving, which implies that all the generated stochastic trajectories will lie on the Bloch sphere.

\begin{figure*}[!t]
    \begin{center}
\includegraphics[width=0.8\textwidth]{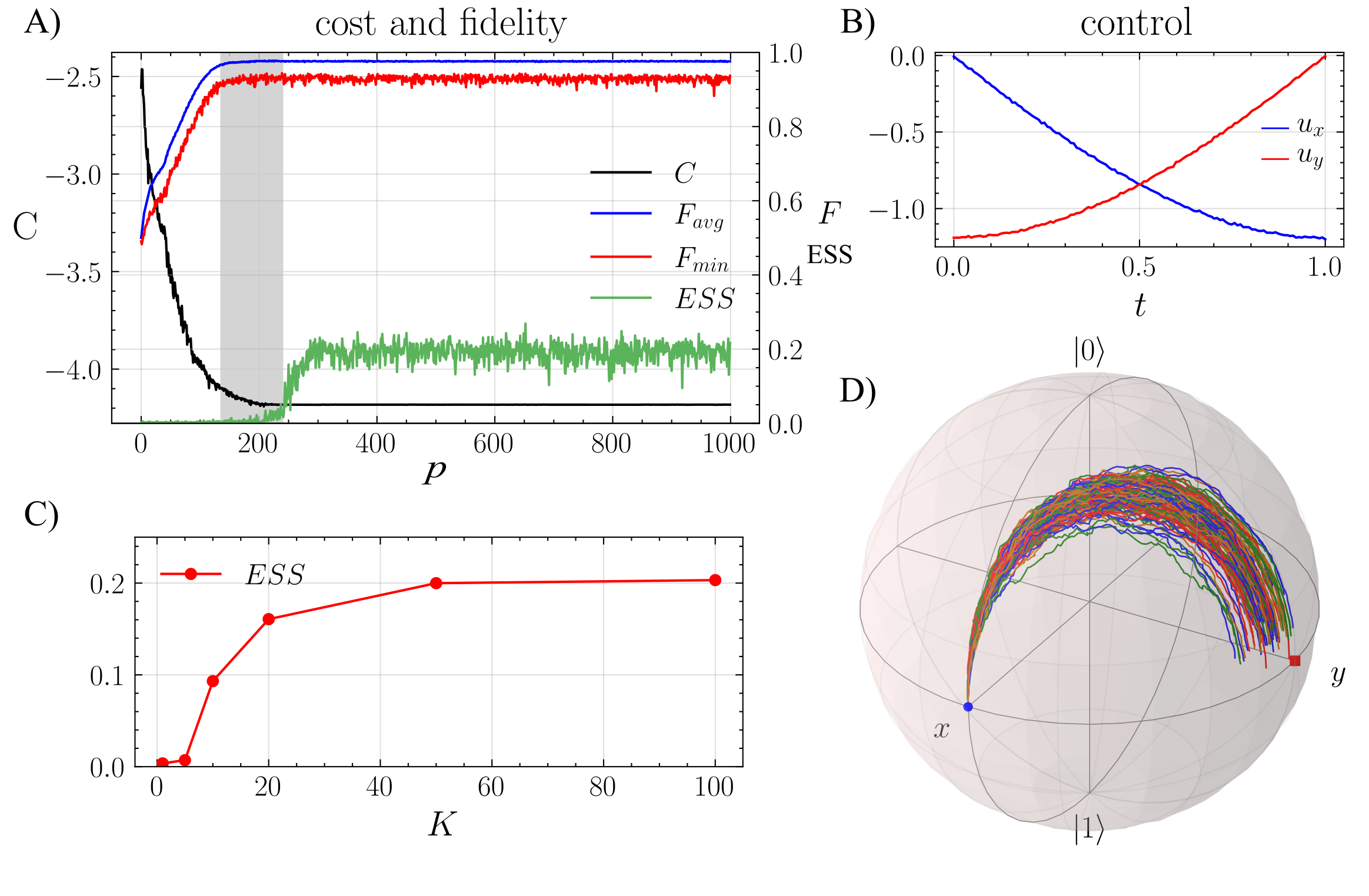}
    \end{center}
    \caption{
   Control of a noisy qubit from $X\to Y$. ({\bf A}) Average fidelity $F_\text{avg}$, worst-case fidelity $F_\text{min}$, effective sample size ESS and cost $C$ in~\ref{eq:stoch_cost} versus importance sampling iterations $p$.
    The converged average fidelity is $F^* = 0.9759 \pm 0.0006$.
   ({\bf B}) Optimal control solution $u_{x,y}(t)$ after convergence of the algorithm.
   The optimal solution is two-fold degenerate. The two solutions are related by a global sign $u_{x, y} \goto -u_{x, y}$.
({\bf C}). Dependence of the quality of the optimal control, measured by the ESS, on the number of pulses $K$.
({\bf D}). Optimally controlled trajectories on the Bloch sphere.
Parameters: final time $T=1$, noise coupling $D=0.005$, control weight $R= 1$, fidelity weight $Q=10$, number of pulses $K=128$, number of trajectories $\ntraj=400$ per IS step, maximum number of IS steps $n_{IS}=1000$ and time discretization $dt=T/N_T$ with $N_T=128$. IS smoothing window $w=40$ (see Section~\ref{sec:IS}).
}
    \label{fig:onequbit_XtoY}
\end{figure*}

We use as cost Eq.~\cref{eq:stoch_cost}, subject to $R D = \lambda$.
We consider the state preparation problem from an initial state $\psi_0=\ket{X}:=\frac{1}{\sqrt{2}}(\ket{0} + \ket{1})$ to 
a target state $\phi = \ket{Y} := \frac{1}{\sqrt{2}}(\ket{0} + i \ket{1})$.
Although the transformation $X\to Y$ can be realized by a simple $\sigma_z$ rotation, this task is non-trivial, since $\sigma_z$ is not one of the control primitives in $H$.
Instead, $\sigma_x,\sigma_y$ must coordinate so as to realize the desired rotation.
As a result, the optimal trajectories do not lie on the equator.

In Fig.~\ref{fig:onequbit_XtoY} (A) we show the adaptive importance sampling.
We plot the average fidelity over trajectories $F_\text{avg}$, the worst-case fidelity over trajectories $F_\text{min}$ and the average control cost $C$ in~\cref{eq:stoch_cost} versus IS steps.
In addition, we plot the effective sample size $ESS$ in~\cref{eq:Neff}, which is a sensitive measure of the quality of the optimal control solution.
The ESS indicates how close the control solution is to the optimal (feedback) control, for which $ESS=1$.
We observe that while the fidelity and control cost converge fast to constant values, the $ESS$ still increases indicating that the quality of the control solution is still improving (shaded region).
We observe that the control solution becomes smoother in these later IS iterations.
The asymptotic average fidelity for $K=128$ is $F^* = 0.9759 \pm 0.0006$.
Since the optimal control solution is a compromise between the final fidelity and the fluence $U := \int_0^T u^T(t) u(t) dt$, higher fidelity solutions can be obtained by lowering $R$ in Eq.~\cref{eq:stoch_cost}.
Indeed, by reducing $R=1$ to $R=0.1$ the average asymptotic fidelity increases to $F^* = 0.9963 \pm 0.0002$.
Fig.~\ref{fig:onequbit_XtoY} (B) shows one of the two optimal control solutions $u_{x,y}(t)$ after convergence of the IS training.
The other solution is obtained by taking $u_{x,y}(t)\to -u_{x,y}(t)$.
In Fig.~\ref{fig:onequbit_XtoY} (C) we show how the quality of the optimal control solution in terms of asymptotic $ESS$ depends on the number of pulses $K$. 
The $ESS$ increases monotonically until reaching an asymptote at around $ESS \sim 0.21$, showing the sub-optimality of the open-loop control compared to the optimal feedback solution (for which $ESS=1$).
Fig.~\ref{fig:onequbit_XtoY} (D) shows quantum trajectories on the Bloch sphere under optimal control.

\noindent {\bf Comparison with Open GRAPE}.---
We compare the quality of solutions between PiQC and (first order) Open GRAPE algorithm~\cite{boutin2017opengrape}, which compute coherent controls for the Lindblad equation.
See the Supplementary Material for details on the algorithm and implementation.
We solve the state preparation problem $X \goto Y$ for the single qubit using $K=128$ pulses.
The control problem is defined by the Lindblad equation~\cref{eq:lindblad_1qubit} and the cost function~\cref{eq:det_cost}.

\begin{figure*}[!t]
\begin{center}
\includegraphics[width=0.9\textwidth]{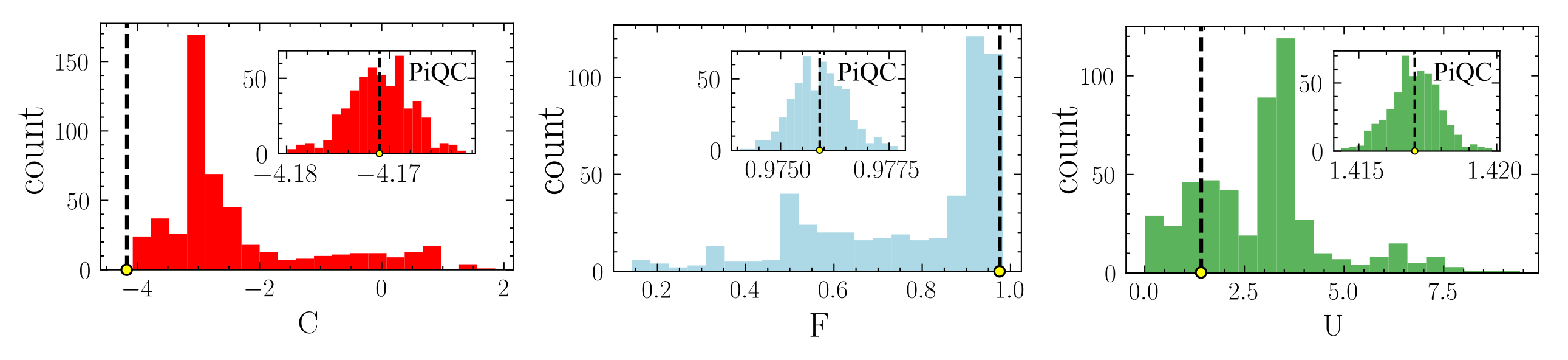}
\hspace{0.5cm}
    \end{center}
    \caption{Histograms of optimal control costs obtained with Open GRAPE (main panel) and PiQC (inset) on $505$ runs.
    For Open GRAPE, the average cost, fidelity and fluence (and standard deviations) are $\braket{C} = -4 \pm 1$, $\braket{F} = 0.8 \pm 0.2$ and $\braket{U} = 3 \pm 2$, respectively.
    The minimum cost achieved is $C_{min} = -4.07$.
For PiQC, the corresponding average values (yellow dot) are $\braket{C} = -4.171 \pm 0.003$, $\braket{F} = 0.9759 \pm 0.0006$ and $\braket{U} = 1.4170 \pm 0.0009$.
In the main panels, the dashed vertical lines at the PiQC average are wider than the actual standard deviations of the PiQC distributions.
This illustrates the large difference in variance between the two methods.
All PiQC runs converge to solutions with the same cost, fidelity, fluence and effective sample size up to small statistical fluctuations due to the stochastic nature of PiQC algorithm.
Parameter setting: $D=0.005$, $T=1$, $R=1$, $Q=10, K=128$, $w=20$.
}
    \label{fig:grape_hist}
\end{figure*}

The convergence of GRAPE-type algorithms depend on the specific problem and initial guesses, therefore it is common practice to run the algorithm with multiple random control pulses and select those that are successful~\cite{chen2025robust}.
We executed the PiQC and Open GRAPE algorithms 505 times, utilizing random control pulse initializations sampled from a normal distribution (centered at 1, 0, and -1, with standard deviation 2), and from a uniform distribution (integers between -10 and 10, with a scale factor of 1 and 0.5).
Regarding the stopping condition for Open GRAPE, the optimization stops when the cost cannot be further optimized by decreasing the learning rate; see App.~\ref{app:opengrape} for details.
On the other hand, PiQC converges within 800 IS steps utilizing a single control initialization, i.e. $u^{(0)} = 0$.

In Fig.~\ref{fig:grape_hist} we show the distribution of control costs (along the fidelities and fluences) for Open GRAPE  (main panels) and PiQC (insets).
We mark the average cost, fidelity and fluence of PiQC solutions by a yellow dot in each graph.
The PiQC solutions have lower cost than the Open GRAPE solutions.
The width of the dashed vertical lines locating each mean is much larger than the standard deviation of each PiQC distribution, showing that the variance of PiQC solutions is very narrow compared to that of Open GRAPE ones.
Open GRAPE converges to many different solutions with different costs depending on the initial condition.
Instead, PiQC converges consistently to one of two solutions with same cost, fidelity, fluence and effective sample size, which are related by a global sign (Fig.~\ref{fig:onequbit_XtoY} (B)).
The difference between individual solutions has root-mean-square (RMS) error less than $\bigo{10^{-4}}$.
By direct inspection we can also verify that these profiles correspond to two different underlying smooth solutions.
We conclude that the different PiQC solutions, rather than representing local minima, correspond to the same underlying solution up to small statistical fluctuations due to the stochastic nature of the algorithm.

We further compare these two algorithms' solutions by performing a t-distributed stochastic neighbor embedding (t-SNE) embedding~\cite{van2008visualizing} over Open GRAPE solutions together with PiQC solutions.
In Fig.~\ref{fig:tsne} (A) we plot some of the most regular Open GRAPE control solutions.
In Fig.~\ref{fig:tsne} (B) we show the t-SNE embeddings for $[u_x, u_y]$, considered as a vector of length $2K$.
In Fig.~\ref{fig:tsne} (C) we plot the average profiles of the Open GRAPE solutions of cluster 1 (red) and cluster 2 (blue) together with the average PiQC profiles.
Each dot represent a different Open GRAPE solution and we represent the (average) PiQC solutions with yellow dots.
We further perform a K-means decomposition to cluster the different solutions into 3 clusters (colored in red, blue and black).
The different Open GRAPE solutions tend to cluster around each ground truth PiQC solution.
The qualitative resemblance between the averages and the PiQC solution indicates that the Open GRAPE local minima tend to cluster around the ground truth solutions represented by PiQC solutions.
The cluster colored with black consists of local minima control solution of Open GRAPE with high control cost $C$.
We excluded the black cluster from the analysis because they correspond to high cost solutions.
\begin{figure*}[!t]
\begin{center}
\includegraphics[width=0.7\textwidth]{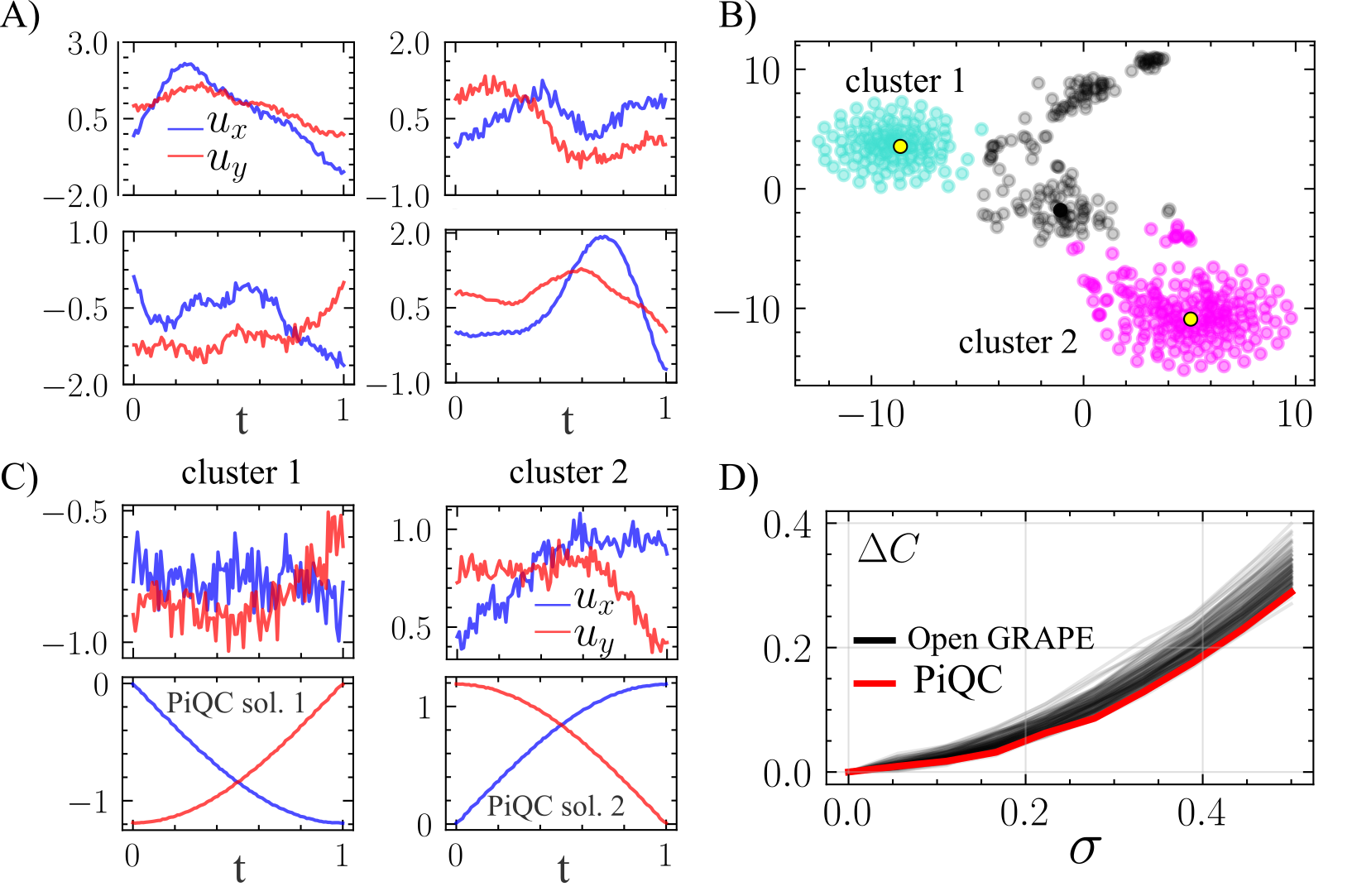}
    \end{center}
    \caption{({\bf A}) Examples of Open GRAPE solutions with diverse levels of regularity for different seeds.
    ({\bf B}) 2-D t-SNE dimension reduction representation followed by a K-means decomposition of Open GRAPE solutions (red, blue and black dots) and the two average PiQC solutions (yellow dots).
 The cluster of black dots correspond to local minima of relatively high cost $C$.
 ({\bf C}) Average Open GRAPE solutions for cluster 1 (red) and cluster 2 (blue).
 In the bottom panels, the PiQC solutions corresponding to each cluster.
 ({\bf D}) Robustness $\Delta C = C(u+\xi)-C(u)$ versus $\sigma$ with $u$ the high-fidelity high-ESS PiQC solution (red) and the high-fidelity low-ESS Open Grape solution (black).
    }
    \label{fig:tsne}
\end{figure*}

During the IS iterations of the PiQC algorithm, the optimization keeps improving in terms of ESS, even when maximum fidelities are reached; see shaded area in Fig.~\ref{fig:onequbit_XtoY} (A).
Instead, Open GRAPE solutions, while having similar fidelities as the PiQC solution, have low $ESS=\bigo{10^{-3}}$.
We expect that the high fidelity, high-ESS PiQC solutions are more robust to random perturbations than the high fidelity Open GRAPE solutions.
We test this by perturbing the optimal control solution $u(t) \to u(t) + \xi(t)$ with $\xi(t)$ with Gaussian perturbations with mean zero and standard deviation $\sigma$ and compute the maximal cost difference $\Delta C = C(u+\xi) -C(u)$ for 20 different noise realizations. 
We compare the high-fidelity Open GRAPE solutions, defined by having a cost $C < -3.8$ with the high fidelity high-ESS PiQC solution in Fig.~\ref{fig:tsne} (D) for different $\sigma$.
For completeness, we also include in the experiment the high-fidelity but low-ESS PiQC solutions corresponding to the shaded area of Fig.~\ref{fig:onequbit_XtoY} (A).
It can be observed that the PiQC solution (in red) is more robust to random perturbations than the Open GRAPE solutions (in black).
Therefore, we conclude that the ESS is a useful additional criterion to assess the quality in terms of the robustness of the (open loop) control solution against perturbations in the control fields.
See also the Supplementary Material for an experiment on robustness that suggests that PiQC is more robust under imperfect knowledge of the model Hamiltonian $H_0$.

Due to discretization artifacts and a control model consisting of a finite number of pulses, the control problem is generally non-convex, and most gradient-based algorithms suffer from local minima.
The yield of GRAPE-type algorithms, i.e. the probability of finding quality solutions, depends on the implementation and is usually modest compared to the number of trials~\cite{chen2025robust}.
For the single qubit system in this section and the NMR experiments in Sec.~\ref{sec:nmr}, the yield of PiQC can be regarded as approximately one, as we consistently reached quality solutions using only a single control seed.
One might note that the optimization process of PiQC requires sampling $\ntraj$ trajectory evolutions at each optimization step to update the control, contrary to Open GRAPE that requires only two evolutions per step, forward and backward, for updating the controls.
However, in PiQC the trajectory evolutions are highly parallelizable, as they are computed using a linear SSE that can be easily vectorized, considerably reducing the computation overhead.
The high parallelizability of PiQC speeds up the optimization and enhances the unique feature that separates the path integral control method from standard approaches, which is the adaptive importance sampling scheme.
At each optimization step, an ensemble of trajectories probes the objective landscape and collect the necessary information to update the controls for the next step.
This collective search driven by noise proves to be helpful in avoiding local minima, as we have seen in this section for the single qubit.

In this single‑qubit example, PiQC consistently converges to solutions with the same cost and fidelity from a single initialization, while GRAPE’s performance depends strongly on the initial guess---a known limitation of gradient-based local search.
We therefore compare solution quality and consistency rather than raw performance.


\subsection{Greenberger-Horne-Zeilinger (GHZ) states on open NMR systems}\label{sec:nmr}

In this section, we test the capability of our algorithm in handling systems with multiple qubits.
We demonstrate that our PiQC algorithm can successfully prepare states for realistic systems such as NMR molecules.
NMR quantum computing has gained popularity in the past years as a testbed technology for experimental and theoretical research due to the long coherence times attained compared to the dynamical scales of the system~\cite{vandersypen2004nmr}.
NMR systems equipped with transverse field controls have been demonstrated to be fully controllable for state preparation \cite{schirmer2001complete}, which initially positioned them as promising candidates for universal quantum simulators~\cite{jones1998quantum}.
We select NMR systems due to their simplicity, which enables a straightforward setup for the control problem, and allows us to compare with existing published results.
We remark that the same approach could also be applied to the coherent control of superconducting qubits~\cite{khatri2020information}, ion traps~\cite{so2024trapped, olaya2025simulating} or neutral atoms~\cite{allen2025simulating} systems.
NMR systems are defined by a total Hamiltonian $H = H_0 + H_c$, where $H_0$ is the drift and $H_c$ is the control part.
The drift and control Hamiltonians are further defined as
\bal{\label{eq:nmr_gen}
H_0 = H_Z + H_I
\qquad
H_c = \sum_{n=1}^n u_{ix} \sigma^x_i + u_{iy} \sigma^y_i
}
where $H_Z = \sum_{i=1}^n \pi D_i \sigma^z_i$ and $H_I = \sum_{i< j}^{n} \frac{\pi}{2} J_{ij}\sigma^z_i \sigma^z_j
$.
The constants $D_i$ denote self-energies, commonly called chemical shifts, and $J_{ij}$ are the coupling strengths between the different spins in the system.
Pulse-based control is a common approach to control NMR systems.
It consists in the sequential application of external radio-frequency pulses across the $x$-$y$ plane.
These pulses are represented by time-dependent fields $u_{ia}$ acting on each qubit $i$ along $x$ and $y$ directions.
 
Due to the typical large difference in magnitude between $H_Z$ and $H_I$, it is common practice to work in a rotating frame where the dynamics induced by $H_Z$ is absent.
Operationally, the rotating frame has the effect of setting $H_Z=0$ in the drift Hamiltonian $H_0$ and inducing a rotation of the controls.
The optimized pulses in the rotating frame can be easily transformed back to the laboratory frame.
We write the details about the rotation frame approach and the proof of control equivalence in the Supplementary Material.

We write the Lindblad equation for the NMR system as
\bal{\label{eq:nmr_lindblad}
\dot\rho = -i[H,\rho] +D \sum_{i=1}^n \left( \sigma_i^- \rho \sigma_i^++\sigma^+_i \rho \sigma_i^--\rho\right)
}
where $H = H_I + H_c$ and dissipation $\sigma_i^{\pm}$ of equal strength $D>0$ acting on individual spins.
The control cost is defined as
\bal{\label{eq:nmr_cost}
C= \braket{ -\frac{Q}{2} \cF(\psi_T)+ \frac{R}{2} \int_0^T \lpar \sum_{i=1}^n \sum_{a=x,y} u_{ia}^2(t) \rpar dt}
}
Here $R>0$ and satisfies $R=\lambda D^{-1}$ with $\lambda > 0$.
As in the case of one qubit, we use the transformation \cref{eq:one-qubit-invariance} to convert the non-Hermitian operators 
$C'_{ia}=\sigma_i^+, \, \sigma_i^-$ in Eq.~\cref{eq:nmr_lindblad} into anti-Hermitian operators
\bal{
C_{ib} = C'_{ia} A_{ab} \qquad \text{for all $i$}
}
This leads to the following expression for the transformed operators
\bal{\label{eq:Cxy}
C_{ib} = - i \sigma_i^b    \qquad \text{for all $i$\,\, and \,\,$b = x, y$}\,.
}
The corresponding unraveling is
\bal{\label{eq:nmr_unraveling}
d\psi =& -i H_I \psi dt -i \lpar \sum_{i=1}^n\sum_{a=x,y} (u_{ia} dt +d W_{ia} ) \sigma_i^a  \rpar \psi \notag\\
 &-n D\psi dt
}
with $dW_{ia} dW_{jb}=\delta_{ij}\delta_{ab}D dt$ for $i,j=1,\dots n$ and $a,b=x,y$.
Equations \cref{eq:nmr_cost} and \cref{eq:nmr_unraveling} define a path integral control problem.
The initial state is set to $\psi_0 = \ket{0}^n=Z^n$.
We choose the target state $\phi$ to be a Greenberger–Horne–Zeilinger state defined by $\ket{GHZ_n} := \frac{1}{\sqrt{2}} \lpar \ket{0}^n + \ket{1}^n \rpar$, which represents a maximally entangled state in the global entanglement (or Meyer-Wallach) measure for multipartite systems~\cite{meyer2002global}.
In the rotating frame, the target state takes the form $\phi = \frac{1}{\sqrt{2}} (e^{i2\pi \omega T} \ket{0}^n + e^{-i2\pi \omega T} \ket{1}^n)$ with $\omega = \frac{1}{2}\sum_j D_j$ (see Supplementary Material).

We consider the 4-qubit NMR example of \cite{chen2023accelerating}.
We add in the Supplementary Material a toy example with $n=10$ qubits to showcase the applicability of PiQC to handle larger systems.
We take the shifts $D_i$ and couplings $J_{ij}$ from Table I therein.
In~\cite{chen2023accelerating} authors use an iterative version of GRAPE, called iterative GRAPE (i-GRAPE), to optimize the pulses.
GRAPE-like methods need, in general, a high number of pulses to validate the approximation of the optimization step and reach high fidelities.
As a result, those solutions tend to be very irregular.
In addition, the variation is high between solutions for different runs, and the reported solutions for GRAPE and i-GRAPE are very different between each other.
PiQC, on the other hand, is not constrained by the number of pulses, which can be chosen arbitrarily.
Whether the solutions achieve high fidelities or not depends on other factors, such as system controllability or cost parameters, such as $R$ and $Q$.
This difference arises from the distinct approach in dealing with time correlations in PiQC compared to gradient-based methods; we return to this point in Sec.~\ref{sec:discussion}.

We couple the 4-qubit NMR molecule to an environment.
The dynamics corresponds to the Lindblad equation~\cref{eq:nmr_lindblad}.
Following~\cite{chen2023accelerating}, we choose the final time $T=8.8$ and set the noise coupling $D=\num{1e-3}/T_r$, with $T_r=0.61$ the minimum relaxation time of the system.
We compute the infidelity $1 - F_{avg}$ reached after convergence for different values of $K$.
This is shown in red in Fig.~\ref{fig:4nmr} (A).
\begin{figure*}[!t]
\begin{center}
\includegraphics[width=1\textwidth]{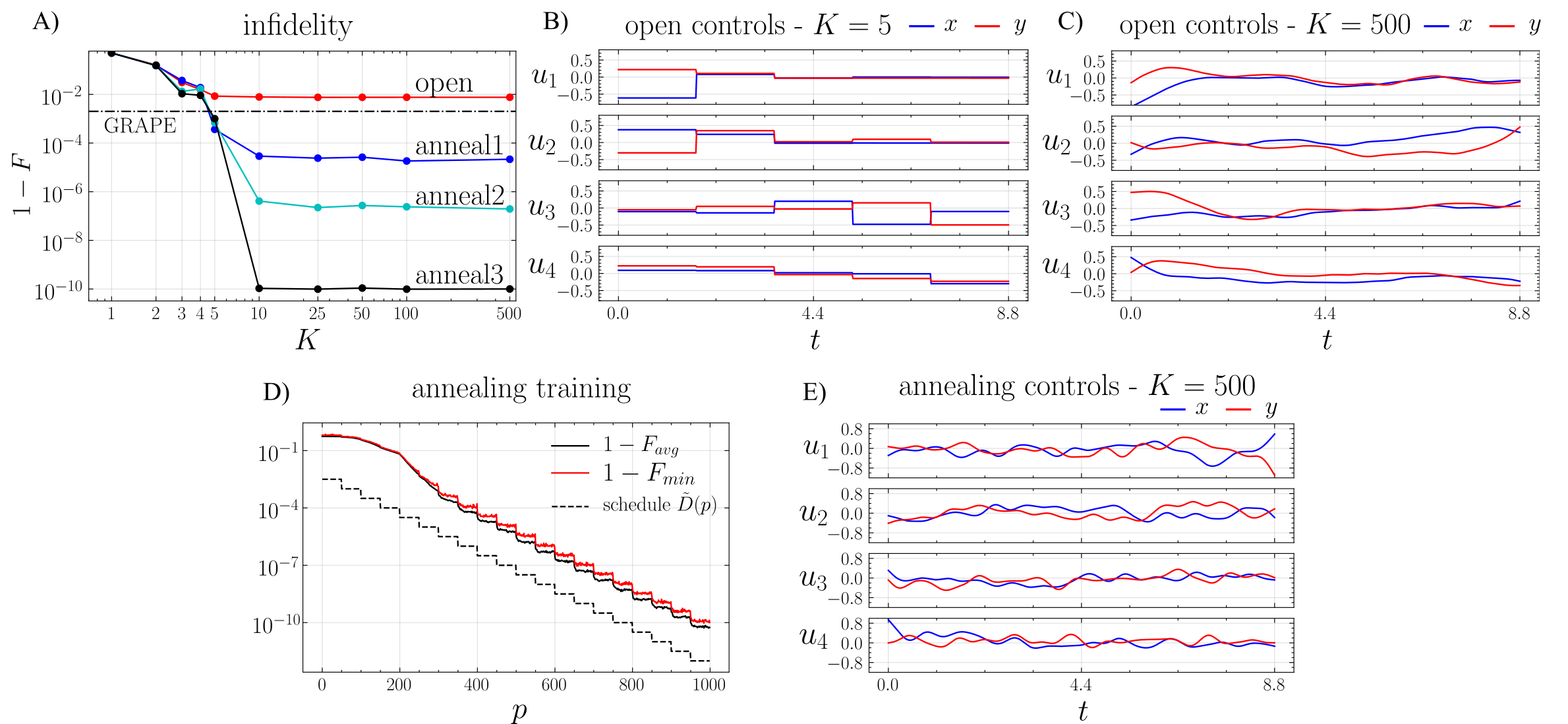}
    \end{center}
    \caption{4-NMR control. ({\bf A}) Infidelity $1-F$ vs $K$. Curve \texttt{open} corresponds to the open system case, with fixed noise coupling $D$. Curves \texttt{anneal1}, \texttt{anneal2} and \texttt{anneal3} correspond to the infidelities using annealed control. The rest of parameters are: $N_T = 500$, $n_{IS}=1000$.
We show the control solutions for the open system case for $K=5$ pulses ({\bf B}) and $K=500$ pulses ({\bf C}).
In panel ({\bf D}) we show an example of annealing training for $K=500$ pulses and logarithmic end cost.
We plot the average and the worst case infidelities as function of the IS step, and in panel ({\bf E}) we plot the corresponding final solutions.
These solutions achieve an infidelity of $\num{6e-11}$.
The annealing training is performed using an averaging window $w=10$ in the first $200$ IS steps, then setting $w=1$ for the rest of the training.
}
\label{fig:4nmr}
\end{figure*}
In the unitary case, authors in~\cite{chen2023accelerating} report solutions with infidelity $\num{2e-3}$ and $K=1760$ GRAPE pulses.
In the noisy case (non-zero $D$), we reach infidelities less than $\num{1e-2}$ with just $K=5$ pulses.
The same solutions achieve an infidelity less than $\num{8e-4}$ for the unitary case ($D=0$).
This showcases the advantage of PiQC in requiring less pulses to achieve high fidelity solutions.
In Fig.~\ref{fig:4nmr} (B-C) we show the solution profiles corresponding to $K=5$ and $K=500$ pulses, respectively.
In the $K=500$ case, the controls are iteratively smoothed during the importance sampling using cubic splines.
For large $K$, this enforces smooth solutions without sacrificing accuracy, which can be preferred in analog settings such as in neutral-atom computers or quantum annealers.

\subsection{Application of annealed diffusion control}\label{sec:numerics_annealing}

In this section we compare PiQC and GRAPE solutions in the unitary regime.
For this, we use the annealed variant of PiQC presented in Section~\ref{sec:qdcxclosed} to compute control solutions for unitary dynamics.
We apply this procedure to the 4-NMR case.
We parametrize the noise value $D$ in~\cref{eq:nmr_unraveling} in the importance sampling step $p$, interpolating between two  noise values: a high initial noise $D_0$, and a much lower noise $D_f \ll1$.
The form of the annealing schedule $D(p)$ is an ansatz that we chose empirically.
We find that good schedules allow the algorithm to sample the control landscape more effectively in early stages of the optimization.
We choose a piecewise constant schedule composed of $D_i\,(i=1, \ldots, N_{steps})$ values distributed equidistantly in the IS interval.
In the $i$-th IS interval, the noise value $D_i := D_0 \lpar \frac{D_f}{D_0} \rpar^{(i-1)/(N_{steps}-1)}$.
We set $N_{steps}=20$, $D_0 = \num{3e-3}$ and $D_f = \num{1e-12}$; see dashed line in Fig.~\ref{fig:4nmr} (D).

We compute infidelities $1-F$ vs $K$, for different values of $Q$.
In Fig.~\ref{fig:4nmr} (A) we show the infidelity curves for two values of $Q$, with fixed $R=1$.
By increasing $Q$, we expect to obtain lower infidelities, and this is shown in panel (A), where curve \texttt{anneal1} corresponds to $Q=400$, and \texttt{anneal2} to $Q=4000$.
We readily observe that PiQC solutions for these two cases produce an improvement in infidelity of $2$ and $4$ orders of magnitude, respectively, with respect to GRAPE~\cite{chen2023accelerating}.
We can push further this advantage by changing the end cost in~\cref{eq:nmr_cost} to a log-infidelity defined by $\Phi(\psi_T) = \frac{Q}{2} \log(1-\cF(\psi_T))$.
The log-infidelity is more sensitive to fine tuning when the fidelity is close to one, forcing the algorithm to keep improving the controls even when high fidelities are reached.
Applying the log-infidelity cost ($K=500$ and  $Q=400$) we observe an improvement of $7$ orders of magnitude with respect to GRAPE.
This is shown by the infidelity curve \texttt{anneal3} in panel Fig.~\ref{fig:4nmr} (A).
For this last case, we show in panel (D) the infidelity training curves (solid red and black) and the corresponding smooth solutions in panel (E).
As for the open case, the smoothing was performed during training, using cubic splines.
In this 4‑qubit NMR test case, the annealed variant of PiQC finds solutions with infidelities several orders of magnitude lower than those reported for GRAPE with comparable pulse budgets and total time.


\section{Discussions and Outlook} \label{sec:discussion}

In this work we have introduced Path integral Quantum Control (PiQC), a method for computing open-loop control solutions for both open and closed quantum systems.
PiQC is based on a stochastic reformulation of the deterministic state-preparation problem, using unravelings of the open-system dynamics to cast the control task into a path integral control form.
PiQC expresses the relevant control updates as trajectory averages, enabling direct Monte Carlo estimation on the space of stochastic quantum trajectories.
This representation allows one to approximate effective controls by sampling in Hilbert space.

A distinctive feature of PiQC is that it adopts an optimization paradigm that departs fundamentally from standard gradient-based schemes. In particular, PiQC relies on adaptive importance sampling and is therefore effectively \emph{gradient-free}: control updates are obtained by direct sampling of the cost functional, without evaluating gradients.
In generic control problems, the principal computational challenge is to assess how early control actions influence the state and cost at later times; GRAPE-like or PMP-based approaches address this via backward propagation of adjoint or co-state variables~\cite{khaneja2005optimal,Koch2019,Gautier2024AdjointOpen,Krotov1996Adjoint}.
In contrast, PiQC generates forward stochastic trajectories that are reweighted by exponentiated control costs, and uses these \emph{global statistics} to update the control parameters in each time bin independently---a hallmark of the path integral control framework.
As other trajectory-based approaches, PiQC is intrinsically \emph{state-centered}: it propagates state vectors of dimension $N$, rather than density matrices of dimension $N^2$, leading to an inherent quadratic runtime advantage over density-based approaches.
Finally, the algorithm is \emph{highly} parallel: the importance-sampling updates in Eqs.~\cref{eq:PI13}-\cref{eq:open_loop_sol} decompose into sums over trajectories, so the computation can be distributed across $m$ workers, each computing a local update $d\theta_p^{(i)}$, which are then averaged to obtain $\theta_{p+1} = \theta_p + \frac{1}{m}\sum_{i=1}^m d\theta_p^{(i)}$.
This design minimizes communication overhead, since only parameter updates---rather than raw trajectory data—need to be exchanged between machines (see Fig.~\ref{fig:algs}).
\begin{figure}[ht]
\bc
\includegraphics[width=0.4\textwidth]{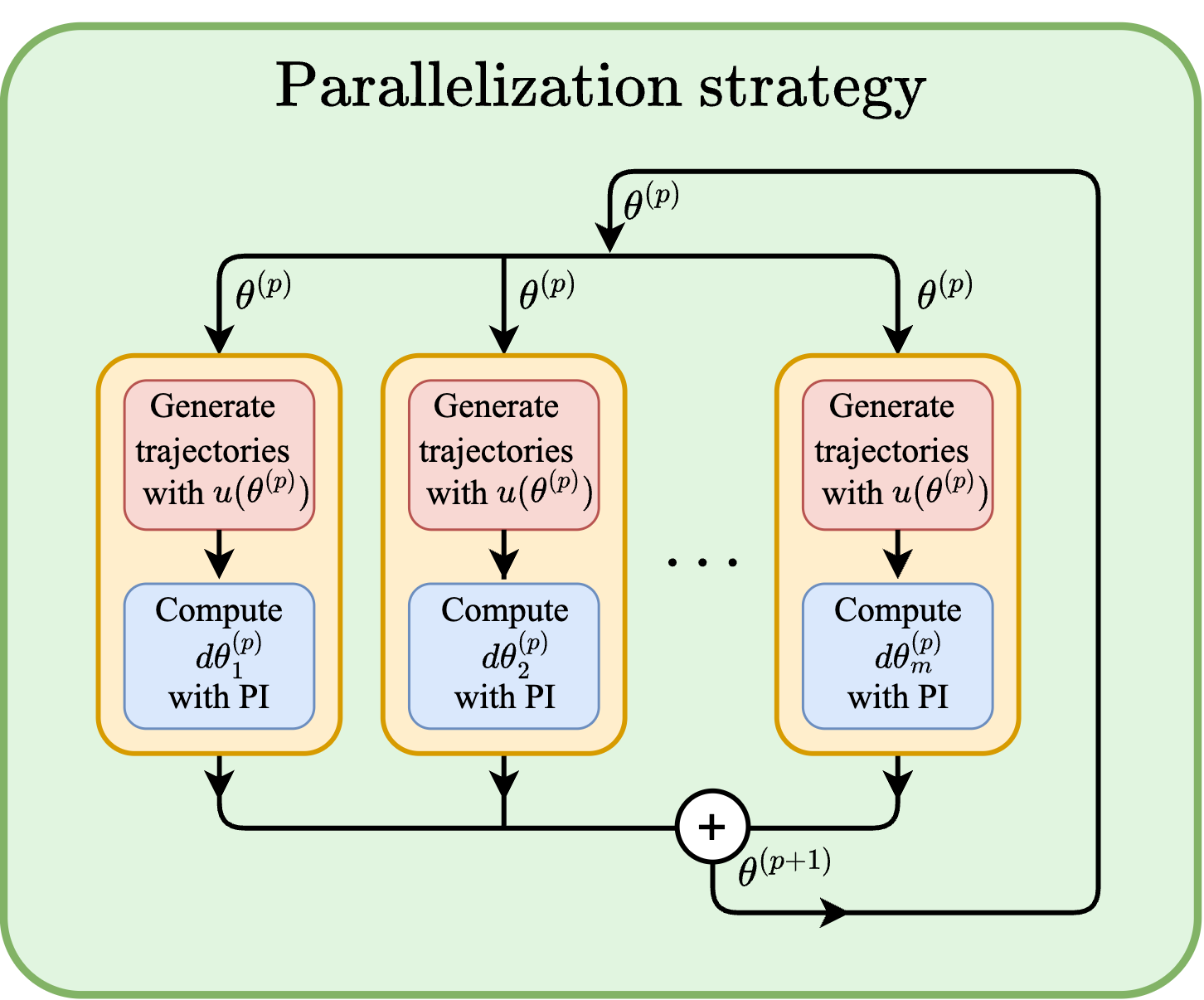}
\ec 
\caption{During each iteration, samples are generated using control functions $u(\theta^{(p)})$.
These samples are used to compute new parameters $\theta^{(p+1)}$.
These two steps can be distributed on many machines with minimal communication overhead, since no large volumes of simulated raw data need to be transferred between machines, only the parameter updates.
}
\label{fig:algs}
\end{figure}

We remark that for linear control parametrizations (an example is the pulse-based control scheme treated in this work) our method provides closed-form expressions for control updates computed as stochastic averages over quantum trajectories.
For more general non-linear parametrizations---e.g., when the control model $u(t; \theta)$ is represented by a neural network---the Path Integral Cross-Entropy (PICE) method can be employed~\cite{kappen2015adaptive}.

The main limitation of our algorithm rests in the anti-Hermitian constraint that requires the dissipators $C_a$ to live in a space spanned by anti-Hermitian operators (see Sec.~\ref{sec:stoch}).
This is a direct consequence of using a linear dynamics for the unraveling (see Eqs.~\cref{eq:control_linear} and \cref{eq:sse1}).
Despite this, there are several real world scenarios where this constraint is satisfied, as pointed out in Sec.~\cref{sec:stoch}.
For example, PiQC can be applied in dissipative settings such as in atom-light interactions~\cite{Daley2014}, relaxation and detuning in modern quantum computing platforms based on, e.g., superconducting qubits, ion-traps or neutral atoms~\cite{khatri2020information, so2024trapped, olaya2025simulating}.
For example, Ref.~\cite{keijzer2025qubit} discusses several types of noise affecting neutral-atom platforms that satisfy the anti-Hermitian constraint assumed in our framework.
The same formalism can, in principle, be used to design optimal error-mitigation strategies for coherent noise~\cite{berberich2024robustness,garcia2025resilience,shao2024multiple} in quantum computing architectures based on superconducting qubits, trapped ions, or neutral atoms operating at finite temperature~\cite{khatri2020information}. It may also be applied to mitigate Pauli-type errors, such as dephasing and depolarizing noise, and thereby to inform more resilient compilation protocols for quantum algorithms~\cite{garcia2025resilience}. Furthermore, the anti-Hermitian constraint might be relaxed by exploiting more advanced constructions, such as the iterative unraveling techniques developed in Ref.~\cite{satoh2016iterative}, which allow for more general stochastic representations beyond the PI-compatible class.
A systematic exploration of these extensions, together with a detailed analysis of how the algorithm scales with system size, is left for future work.

We further introduce an annealed variant of PiQC that employs synthetic noise as an algorithmic tool to approximate controls for closed quantum systems.
In a proof-of-principle NMR state-preparation task, this noise-assisted variant yields solutions that compare favorably with those obtained by GRAPE, and in some instances achieve substantially lower infidelities.
In this setting, the anti-Hermitian constraint is no longer tied to a physical environment, but instead enters as part of an ansatz used to compute controls in the absence of dissipation.
From this perspective, the annealed variant of PiQC can be regarded as a quantum-control heuristic analog of classical annealing methods.

An on-hardware implementation of PiQC, where quantum hardware replaces classical SSE simulation, is an interesting direction for future work, in line with existing hybrid quantum-classical optimal control schemes~\cite{Li2017HybridQOC}.
Key challenges would include injecting controllable noise into the device and accommodating hardware-native imperfections~\cite{Abbas2020NoiseResilientVQA,MitigatingControllerNoise2024,UniversallyRobustControl2025}.

In principle, the PI-control formalism underlying PiQC also applies to feedback-control settings with continuous measurements, but realizing this in practice would require incorporating measurement back-action and partial observability explicitly into the optimization.
This extension is currently under development and will be presented in future work.

\section{Acknowledgements}
We thank Christiane Koch, Luis Pedro Garcia-Pintos, Martin Larocca, Peter Komar and Roeland Wiersema for helpful discussions and advice.
We also thank Antonio Sannia for useful feedback on a previous draft of this manuscript.

\section{Data and code availability}
The data supporting the findings of this study can be fully reproduced by following the methodology described in the article.
In addition, the code necessary for replicating the PiQC results is available in a GitHub public repository~\cite{github_piqc}.

\bibliography{references}

\appendix

\section{Unravelings of the controlled Lindblad equation} \label{app:unraveling}

In the theory of stochastic quantum unravelings~\cite{Wiseman2001,hudson1984quantum,parthasarathy2012introduction,belavkin2013quantum, barchielli1991measurements, diosi1997non, gambetta2002non}, the controlled Lindblad equation~\cref{eq:lindblad} can be derived as the average dynamics over all particular time realizations of a stochastic quantum state $\psi_t$ that follows the SSE
\bal{\label{eq:sse1}
d\psi_t=& -i H_0 \psi_t dt -\frac{1}{2}D_{ab} {C_b}^\dagger C_a\psi_t dt \notag\\
&+C_a \psi_t (u_a(t) dt + dW_t^a)
}
with Ito structure $\braket{dW_t dW_t^T}=Ddt$, with $D$ a positive symmetric matrix, and $u$ an open-loop control, and $C_a = -H_a$.
The proof is straightforward using Ito calculus, and the reader may consult~\cite{percival1998quantum, barchielliQuantumTrajectoriesMeasurements2009, wiseman2009quantum, semina2014stochastic} for derivations corresponding to diagonal $D$ in the absence of control.
We give a proof for general non-diagonal $D'$ and $u\neq 0$.

Define the stochastic density operator $P_t := \psi_t\psi_t^\dagger$.
Compute the dynamics of $P_t$ as
\bal{\label{eq:sme1}
dP_t={} & d\psi_t \psi_t^\dagger + \psi_t d\psi_t^\dagger +d\psi_t d\psi_t^\dagger \notag\\
=& -i[H_0, P_t]dt +\cD[D, \cC](P_t)dt \notag\\
&+ \lbra C_a P_t (u_a(t) dt +  dW_t^a) + \hc
\rbra\,.
}
On the other hand, it is easy to show that the change in the norm $\|\psi_t\|^2 = \psi_t^\dagger \psi_t$ is given by
\baln{
d(\|\psi_t\|^2) = \psi_t^\dagger (C_a+ C_a^\dagger)\psi_t (u_a(t) dt + dW_t^a)
}
The average dynamics should correspond to a completely positive trace-preserving (CPTP) map, a general physical requirement for quantum channels~\cite{barchielliQuantumTrajectoriesMeasurements2009}.
In the stochastic setup this means that the norm of the state $\psi_t$ must preserve on average, i.e. $\braket{d\|\psi_t\|^2} =0$.
This applies for all $u$ iff $C_a$ are anti-Hermitian, i.e. $C_a = -i H_a$ for some Hermitian $H_a$.
Note that, under this condition, the norm of $\psi_t$ conserves deterministically, and all the quantum trajectories remain pure.
After inserting this into the average dynamics~\cref{eq:sme1}, taking the expectation over all noise realization, and setting $\rho(t) = \braket{P_t}$, we obtain the Lindblad equation
\baln{
\dot \rho = -i[H_0 + u_a H_a, \rho] +\cD[D, \cC](\rho)
}
Do to invariance of the dissipator $\cD[D, \cC] = \cD[D', \cC']$ under transformation~\cref{eq:invariant}, we conclude that~\cref{eq:sse1} with $C_a = -i H_a$ unravels the class of Lindblad equations with $D'$ and $\cC'$ generated by transformation~\cref{eq:invariant}.


\newpage

\onecolumngrid

\clearpage
\begin{center}
  {\Large\bfseries SUPPLEMENTARY MATERIAL}
\end{center}
\vspace{2ex}

\setcounter{section}{0} 
\renewcommand{\thesection}{S\arabic{section}} 


\setcounter{section}{0} 


\section*{Path integral control theory} \label{app:appendix_pi}

In this section we describe some basic results regarding PI control theory.
For further details the reader may consult the original papers~\cite{Kappen2005, thijssen2014a}.

Define the optimal cost-to-go as the minimal expected cost starting in any $x$ at time $t$ with $t_0\le t\le T$
\bea
J(t,x)&=&\min_{u \in \cU_t} \avg{ S^u(t)}_{X_t=x}\qquad u^*(t,x) =\underset{u \in \cU_t}{\text{argmin}}  \avg{ S^u(t)}_{X_t=x}
\eea
where $\cU_t$ is the restriction of the space of valid feedback controls $\cU$ to the time domain $[t, T]$.

Define 
\beaa
\psi(t,x)=e^{-J(t,x)/\lambda}\qquad \phi(t)=e^{(S^u(t)-S^u(t_0))/\lambda}
\eeaa
where $\lambda I=RD$. Furthermore we define the stochastic processes $\psi(t)=\psi(t,X_t),$ $u(t)=u(t,X_t)$ and $u^*(t)=u^*(t,X_t)$
\begin{lemm}
\label{lemma1}
For all $t$ in $t_0\le t\le T$ 
\bea
e^{-S^u(t)/\lambda}-\psi(t) = \frac{1}{\lambda \phi(t)}\int_t^{T} \phi(s)\psi(s) (u^*(s)-u(s))^TR dW_s\label{eq:main_lemma}
\eea
\end{lemm}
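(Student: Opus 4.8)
The plan is to recognize that the combination $\phi(t)\psi(t)$ is a (local) martingale along the controlled process $X^u_t$, so that its Ito differential carries no drift and is driven purely by a noise term proportional to $(u^*-u)^T R\,dW_t$; integrating this from $t$ to $t_1$ and evaluating the boundary contribution will then reproduce \cref{eq:main_lemma}. Before computing any differentials I would first record the linear partial differential equation satisfied by $\psi(t,x)=e^{-J(t,x)/\lambda}$. Starting from the Hamilton-Jacobi-Bellman equation for the optimal cost-to-go $J$, whose minimizer is $u^*=-R^{-1}g^T\nabla J$, the log-transform $\psi=e^{-J/\lambda}$ together with the path-integral coupling $R\nu=\lambda I$ linearizes the quadratic term and yields the backward equation $\partial_t\psi + f^T\nabla\psi + \tfrac12\Tr(g\nu g^T\nabla^2\psi) = (V/\lambda)\,\psi$ with terminal data $\psi(T,x)=e^{-\Phi(x)/\lambda}$. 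I would also note the auxiliary identity $g^T\nabla\psi=\tfrac{1}{\lambda}\psi\,Ru^*$, immediate from the form of $u^*$.

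Next I would compute the two Ito differentials separately. For $\phi(t)=e^{(S^u(t)-S^u(t_0))/\lambda}$ I differentiate the lower limit of the cost integral to get $dS^u(t)=-V\,dt-\tfrac12 u^T R u\,dt - u^T R\,dW_t$, and apply Ito's formula; the quadratic variation $(u^T R\,dW_t)^2 = u^T R\nu R u\,dt = \lambda\, u^T R u\,dt$ exactly cancels the $\tfrac12 u^T R u$ drift, leaving $d\phi = -\tfrac{\phi}{\lambda}V\,dt - \tfrac{\phi}{\lambda}u^T R\,dW_t$. For $\psi(t)=\psi(t,X^u_t)$ I apply Ito's formula along the dynamics \cref{eq:PI1}, substitute the linear PDE to eliminate $\partial_t\psi$, and use the $u^*$-identity to obtain $d\psi = \tfrac{\psi}{\lambda}(V + u^T R u^*)\,dt + \tfrac{\psi}{\lambda}(u^*)^T R\,dW_t$.

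Then I would apply the Ito product rule to $\phi\psi$, retaining the cross term $d\phi\,d\psi = -\tfrac{\phi\psi}{\lambda}u^T R u^*\,dt$ (again evaluated with $R\nu R=\lambda R$). Summing the three contributions, the $V$-terms cancel and the $u^T R u^*$-terms cancel, so the whole drift vanishes and $d(\phi\psi)=\tfrac{\phi\psi}{\lambda}(u^*-u)^T R\,dW_t$. Integrating from $t$ to $t_1=T$ gives $\phi(T)\psi(T)-\phi(t)\psi(t)=\tfrac{1}{\lambda}\int_t^{T}\phi(s)\psi(s)\,(u^*-u)^T R\,dW_s$. Finally I evaluate the boundary term: at $s=T$ the running cost reduces to $S^u(T)=\Phi(X^u_T)$ and $\psi(T)=e^{-\Phi(X^u_T)/\lambda}=e^{-S^u(T)/\lambda}$, so $\phi(T)\psi(T)=e^{-S^u(t_0)/\lambda}$; dividing through by $\phi(t)$ and using $e^{-S^u(t_0)/\lambda}/\phi(t)=e^{-S^u(t)/\lambda}$ produces exactly \cref{eq:main_lemma}.

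The calculation is routine Ito calculus; the crux is the exact cancellation of the drift in $d(\phi\psi)$. This cancellation is not generic — it hinges on the precise coupling $R\nu=\lambda I$ between the control-cost matrix and the noise covariance, which is simultaneously what linearizes the HJB equation (so that $\psi$ solves a linear PDE at all) and what makes the Ito corrections in $d\phi$ and in $d\phi\,d\psi$ combine to kill the $u^T R u$ and $u^T R u^*$ contributions. I would therefore treat verifying this condition-dependent cancellation, together with careful bookkeeping of the vector/matrix structure of the $dW_t$ cross terms, as the main point requiring care, with the terminal-time boundary evaluation being the final step that pins the identity down.
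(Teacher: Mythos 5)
Your proposal is correct and follows essentially the same route as the paper's proof: log-transform the HJB equation to get the linear backward PDE for $\psi$, apply Ito's product rule to $\phi\psi$ so that the drift cancels (using $R\nu=\lambda I$ and $g^T\psi_x=\tfrac{1}{\lambda}\psi R u^*$), integrate from $t$ to $t_1$, and evaluate the terminal boundary term. Your bookkeeping is in fact slightly more careful than the paper's (which drops a $\phi$ prefactor in its displayed $d\phi$), but the argument is the same.
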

\begin{proof}
The HJB equation for the control problem Eqs.~\cref{eq:PI1} and~\cref{eq:PI2} in the paper is
\beaa
-\partial_t J=\min_u \left(V +\frac{1}{2}u^TR u + (f + g u)^T \partial_x J + \frac{1}{2}\Tr(gD g^T \partial_{xx}J)\right)
\eeaa
with the boundary condition $J(T,x)=\Phi(x)$.
We can solve for $u$, which gives $u^*=-R^{-1}g^T \partial_x J$ and
\bea
-\partial_t J=V -\frac{1}{2}\partial_x J^T g R^{-1} g^T \partial_x J + f ^T \partial_x J + \frac{1}{2}\Tr(gD g^T \partial_{xx}J)\label{eq:PI3}
\eea
We assume that the matrices $R,D$ are related such that $R=\lambda D^{-1}$ with $\lambda >0$. In terms of $\psi$, Eq.~\cref{eq:PI3} becomes linear
\bea
 \partial_t \psi + f^T\partial_x \psi +\frac{1}{2}\Tr (gD g^T \partial_{xx}\psi)=\frac{V}{\lambda}\psi \label{PI4}
\eea
Alternatively, we consider the stochastic process $\psi(t)=\psi(t,X_t)$. Using Ito calculus we obtain
\beaa
d\psi &=&\left(\partial_t \psi +\partial_x \psi^T(f+g u) +\frac{1}{2}\Tr(g D g^T \partial_{xx}\psi)\right) dt +\partial_x\psi^T g dW= \frac{V}{\lambda}\psi +\partial_x\psi^Tg (udt +dW)
\eeaa
where we used Eq.~\cref{PI4}, with boundary condition $\psi(T,x) =e^{-\Phi(x)/\lambda}$. Using the definition of $\phi(t)$ we obtain
\beaa
d\phi = -\frac{1}{\lambda} \left( V dt + u^T R dW\right)
\eeaa
with initial condition $\phi(t_0)=1$.
Using the Ito product rule we obtain
\beaa
d(\phi \psi) = d\phi \psi + \phi d\psi + d\phi d\psi=-\frac{1}{\lambda} \psi\phi u^T R dW +\phi\partial_x \psi^T g dW = \frac{1}{\lambda} \phi\psi(u^*-u)^T R dW
\eeaa
where in the last step we used $u^* = \lambda R^{-1} g^T \partial_x \psi / \psi$.
Integrating from $t$ to $T$ we obtain
\beaa
\phi(T)\psi(T)-\phi(t)\psi(t) = \frac{1}{\lambda} \int_t^{T}  \phi(s)\psi(s)(u^*(s)-u(s))^T R dW_s
\eeaa
We use $\phi(T)=\phi(t) e^{(S^u(T)-S^u(t))/\lambda}$ and divide by $\phi(t)$ to obtain Eq.~\cref{eq:main_lemma}.
\end{proof}
\begin{coro}
\label{coro1}
We take the expectation of Eq.~\cref{eq:main_lemma} with respect to $dW_{t:T}$ and condition on events up to time $t$, commonly called a {\em filtration} $\cF_t$.
We obtain 
\bea
\psi(t) = \avg{e^{-S^u(t)/\lambda}}_{\cF_t}\label{eq:PI9}
\eea
\end{coro}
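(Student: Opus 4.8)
The plan is to take the conditional expectation of the identity from Lemma~\ref{lemma1} over the future noise increments and to exploit the fact that the right-hand side of \cref{eq:main_lemma} is the increment of an It\^o integral, hence a martingale increment with vanishing conditional expectation. Concretely, I would apply the operator $\avg{\cdot}_{\cF_t}$ to both sides of \cref{eq:main_lemma}, where $\cF_t$ is the filtration generated by the trajectory up to time $t$, and then cancel the martingale term.

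First I would establish the relevant measurabilities. The factor $\psi(t)=\psi(t,X_t^u)$ is a deterministic function of $X_t^u$ and is therefore $\cF_t$-measurable, so $\avg{\psi(t)}_{\cF_t}=\psi(t)$. The prefactor $\phi(t)$ is also $\cF_t$-measurable: although $S^u(t)$ depends on the future through $\Phi(X^u_T)$ and the integrals over $[t,t_1]$, this end-cost cancels in the difference $S^u(t)-S^u(t_0)$, leaving $\phi(t)$ expressed purely through path quantities accumulated on $[t_0,t]$. Consequently $1/(\lambda\phi(t))$ can be pulled outside the conditional expectation on the right-hand side.

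The core step is then to evaluate $\avg{\int_t^{t_1}\phi(s)\psi(s)(u^*(s)-u(s))^T R\, dW_s}_{\cF_t}$. The integrand $\phi(s)\psi(s)(u^*(s)-u(s))^T R$ is adapted, so (under the usual integrability hypothesis) the It\^o integral is a martingale with respect to $\cF_s$, and the increment accumulated from $t$ to $t_1$ has zero conditional expectation given $\cF_t$. Thus the entire right-hand side vanishes under $\avg{\cdot}_{\cF_t}$, and I would conclude $\avg{e^{-S^u(t)/\lambda}}_{\cF_t}-\psi(t)=0$, which is exactly \cref{eq:PI9}.

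The main obstacle—indeed essentially the only subtlety—is justifying that the stochastic integral is a true martingale rather than merely a local martingale, i.e. verifying the square-integrability condition $\avg{\int_t^{t_1}\|\phi(s)\psi(s)(u^*(s)-u(s))^T R\|^2\, ds}<\infty$. For bounded controls on the compact horizon $[t_0,t_1]$ and well-behaved $\psi,\phi$ this holds, but a fully rigorous treatment would require bounding the growth of these factors along trajectories. This is the step I expect to demand the most care, whereas the martingale cancellation that yields \cref{eq:PI9} is immediate once the integrability is granted.
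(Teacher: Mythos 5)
Your proposal is correct and follows essentially the same route as the paper: conditioning Eq.~\cref{eq:main_lemma} on $\cF_t$, using that $\psi(t)$ and $\phi(t)$ are $\cF_t$-measurable (the end cost cancels in $S^u(t)-S^u(t_0)$), and killing the It\^o integral by the martingale property. The extra remark about verifying square-integrability so the stochastic integral is a true martingale is a reasonable refinement that the paper leaves implicit.
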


The following path integral control theorem is useful to estimate these parameters for all times from one set of samples $X_{t_0:T}$. 
\begin{thrm}
\label{thm2}
Let $f: \R \times \R^n \to \R$ and define the process $f_t = f(t,X_t)$. Then
\bal{
\avg{e^{-S^u(t_0)/\lambda}\int_{t_0}^{T}(u^*(s)-u(s))f(s) ds}=\avg{e^{-S^u(t_0)/\lambda} \int_{t_0}^{T} f(s) dW_s}\label{eq:PI8}
}
where the expectation is with respect to the stochastic process \cref{eq:PI1} in the paper. 
\end{thrm}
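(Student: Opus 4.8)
The plan is to make the driftless process $M(t):=\phi(t)\psi(t)$ the central object. The proof of Lemma~\ref{lemma1} already establishes $d(\phi\psi)=\frac{1}{\lambda}\phi\psi\,(u^*-u)^T R\,dW$, so $M$ is a martingale. Evaluating Lemma~\ref{lemma1} at $t=t_0$ (where $\phi(t_0)=1$), and using the pathwise identity $\phi(s)\,e^{-S^u(s)/\lambda}=e^{-S^u(t_0)/\lambda}$, I would first rewrite it in integrated form
\[
e^{-S^u(t_0)/\lambda}=\psi(t_0)+\frac{1}{\lambda}\int_{t_0}^{t_1} M(s)\,(u^*(s)-u(s))^T R\,dW_s,
\]
where $\psi(t_0)=\psi(t_0,x_0)$ is deterministic.

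Next I would introduce the vector martingale $N:=\int_{t_0}^{t_1} f(s)\,dW_s$, multiply the identity above by $N$, and take the expectation. Since $\psi(t_0)$ is a constant and $\avg{N}=0$ (a zero-mean It\^o integral), the deterministic term drops out, leaving only the cross-covariance of two It\^o integrals. Applying the It\^o isometry and contracting the matrix product with the defining relation $R\nu=\lambda I$, the factors $\frac{1}{\lambda}$, $R$ and $\nu$ collapse to the identity on the noise indices, yielding
\[
\avg{e^{-S^u(t_0)/\lambda}\int_{t_0}^{t_1} f(s)\,dW_s}=\avg{\int_{t_0}^{t_1} M(s)\,(u^*(s)-u(s))\,f(s)\,ds}.
\]

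The final step is to replace $M(s)$ by $e^{-S^u(t_0)/\lambda}$ inside the time integral. By Corollary~\ref{coro1}, $\psi(s)=\avg{e^{-S^u(s)/\lambda}}_{\mathcal{F}_s}$; since $\phi(s)$ is $\mathcal{F}_s$-measurable with $\phi(s)\,e^{-S^u(s)/\lambda}=e^{-S^u(t_0)/\lambda}$, this gives $M(s)=\avg{e^{-S^u(t_0)/\lambda}}_{\mathcal{F}_s}$. Because the integrand factor $(u^*(s)-u(s))\,f(s)$ is itself $\mathcal{F}_s$-measurable, the tower property of conditional expectation lets me pull it inside the conditioning and swap $M(s)$ for $e^{-S^u(t_0)/\lambda}$ pointwise in $s$. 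Integrating over $s$ then produces exactly the right-hand side of \cref{eq:PI8}, completing the argument.

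I expect the main obstacle to be this last conditioning step: recognizing that the martingale value $M(s)$ equals the conditional expectation $\avg{e^{-S^u(t_0)/\lambda}}_{\mathcal{F}_s}$ — so that it can be exchanged for $e^{-S^u(t_0)/\lambda}$ against any adapted integrand — is the real content, whereas the It\^o-isometry bookkeeping, in particular tracking the $R\nu=\lambda I$ contraction over the noise indices, is routine but must be handled carefully to avoid spurious factors of $\lambda$ or $\nu$.
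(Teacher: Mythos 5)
Your proposal is correct and follows essentially the same route as the paper's own proof: evaluate Lemma~\ref{lemma1} at $t=t_0$, multiply by $\int_{t_0}^{t_1} f(s)\,dW_s$, apply the It\^o isometry with $R\nu=\lambda I$, identify $\phi(s)\psi(s)=\avg{e^{-S^u(t_0)/\lambda}}_{\mathcal{F}_s}$ via Corollary~\ref{coro1}, and finish with the tower property. The only difference is presentational — you name the martingale $M=\phi\psi$ explicitly — so there is nothing substantive to add.
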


\begin{proof}
Consider Lemma~\cref{lemma1} for $t=t_0$.
Multiply both sides by $\int_{t_0}^{T} f(s) dW_s$ and take the expectation value
\bal{
\braket{e^{-S^u(t_0)} \int_{t_0}^{T} f(s) dW_s} = \braket{\int_{t_0}^{T} f(s) dW_s \int_{t_0}^{T} \phi(s')\psi(s') (u^*(s')-u(s'))^T \frac{R}{\lambda} dW_{s'}}
}
Using $R = \lambda D^{-1}$ and Ito isometry we have
\bal{
\braket{e^{-S^u(t_0)} \int_{t_0}^{T} f(s) dW_s} =\int_{t_0}^{T} \braket{ f(s) \phi(s)\psi(s) (u^*(s)-u(s))} ds
}
Using the definition of $\psi(s)$ and $\phi(s)$ we write
\bal{
\braket{e^{-S^u(t_0)} \int_{t_0}^{T} f(s) dW_s} =\int_{t_0}^{T} \braket{ f(s) e^{(S^u(s)-S^u(t_0))/\lambda} \braket{e^{-S^u(t_0)}}_{\cF_s} (u^*(s)-u(s))} ds
}
Finally, by applying the law of total expectation we obtain \cref{eq:PI8}.

\end{proof}

\begin{coro}
\label{coro2}
By setting $f(t)=\delta_{t \in I}$ with $I=[t_0, t_0 + dt]$, diving by $dt$ and taking the limit $dt \goto 0$ we establish the proof of \cref{eq:opt_J} in the paper. 
\end{coro}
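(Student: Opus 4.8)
The plan is to specialize Theorem~\ref{thm2} to a test function supported only on the first infinitesimal time bin, and then divide by the bin width and let it shrink to zero. Concretely, I would set $f(t)=\delta_{t\in I}$ with $I=[t_0,t_0+dt]$ in \cref{eq:PI8}. Since $f$ vanishes outside $I$, both time integrals in \cref{eq:PI8} collapse from $[t_0,t_1]$ down to $I$, so for every fixed $dt>0$ the identity of Theorem~\ref{thm2} reads
\[
\avg{e^{-S^u(t_0)/\lambda}\int_{t_0}^{t_0+dt}(u^*(s)-u(s))\,ds}
=\avg{e^{-S^u(t_0)/\lambda}\left(W_{t_0+dt}-W_{t_0}\right)},
\]
using $\int_I dW_s=W_{t_0+dt}-W_{t_0}$. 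With the convention $W_{t_0}=0$ the right-hand side is $\avg{e^{-S^u(t_0)/\lambda}\,W_{t_0+dt}}$.

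Next I would divide both sides by $dt$ and take $dt\to0$. On the left, continuity of $u^*-u$ in time near $t_0$ gives $\tfrac{1}{dt}\int_{t_0}^{t_0+dt}(u^*(s)-u(s))\,ds\to u^*(t_0)-u(t_0)$. Because the trajectory is conditioned on the fixed initial data $X^u_{t_0}=x_0$, the boundary values $u^*(t_0)=u^*(t_0,x_0)$ and $u(t_0)=u(t_0,x_0)$ are deterministic and pull out of the outer expectation, leaving $\left(u^*(t_0,x_0)-u(t_0,x_0)\right)\avg{e^{-S^u(t_0)/\lambda}}$. Dividing the whole identity by $\avg{e^{-S^u(t_0)/\lambda}}$ turns the weight $e^{-S^u(t_0)/\lambda}$ into $\omega^u$, and rearranging yields exactly \cref{eq:control3}:
\[
u^*(t_0,x_0)=u(t_0,x_0)+\lim_{dt\to0}\avg{\omega^u\,\frac{W_{t_0+dt}}{dt}}.
\]

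The step I expect to demand the most care is the right-hand limit. The formal slope $W_{t_0+dt}/dt$ diverges like $1/\sqrt{dt}$ pointwise, so it is only meaningful inside the weighted average: $\omega^u$ is correlated with the first noise increment at order $dt$ --- both through the explicit term $u^{T}R\,dW$ appearing in $S^u(t_0)$ (cf.\ \cref{eq:PI2}) and through the dependence of $X^u_{t_0+dt}$, and hence of the entire trajectory tail, on that increment --- so that $\avg{\omega^u\,W_{t_0+dt}}=O(dt)$ and the quotient stays finite. I would not need to verify this finiteness directly, however: since Theorem~\ref{thm2} is an \emph{exact} identity for every $dt>0$ and the left-hand limit manifestly exists, the right-hand limit automatically exists and equals it. The only genuine analytic inputs are the continuity of $u^*-u$ at $t_0$ and a dominated-convergence argument justifying the interchange of the $dt\to0$ limit with the outer expectation.
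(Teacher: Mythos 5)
Your proposal is correct and follows essentially the same route as the paper: the paper's own proof of Corollary~\ref{coro2} is precisely the one-line recipe of substituting the indicator function into Theorem~\ref{thm2}, dividing by $dt$, and passing to the limit, which you carry out in full. Your added remarks on the continuity of $u^*-u$ at $t_0$, the deterministic initial condition letting the boundary values factor out of the expectation, and the finiteness of the weighted slope $\avg{\omega^u\, W_{t_0+dt}}/dt$ are sensible elaborations of details the paper leaves implicit.
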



\section*{Open GRAPE} \label{app:opengrape}
Here we summarize the details of our implementation of Open GRAPE for the benchmarking part of Section~\ref{sec:noisy-qubit} in the paper.
For further details concerning the definition of Open GRAPE algorithm we refer the reader to~\cite{boutin2017opengrape}.
Given the cost objective~\cref{eq:det_cost} and the control model Eq.~\cref{eq:linear_param} in the paper, with $R$ a positive scalar, the first order Open GRAPE update rule (ignoring terms of $\cO(dt^2)$ and higher) becomes
\baln{
A_{ak} \goto A'_{ak}=A_{ak} - \epsilon \frac{\partial C}{\partial A_{ak}}
}
with
\baln{
\frac{\partial C}{\partial A_{ak}} = \frac{iQ}{2} \tr{\lambda_k[H_a, \rho_k]} dt + R A_{ak} dt
}
where $\rho_k$ is the forward-propagated state from $\rho_0$ to time $t_k$ and $\lambda_k$ is the backward-propagated state from $\rho_{target}$ to time $t_k$, $\epsilon$ is the learning rate, and $dt$ is the time step.
We implement the algorithm in Python language.
For the dynamics simulation of the backward and forward propagations we use the open-source Python package {\tt dynamiqs}~\cite{guilmin2024dynamiqs}.

The learning rate $\epsilon$ usually needs to be carefully chosen in order to avoid overshooting.
In the case of GRAPE for closed systems, work~\cite{bukov_reinforcement_2018} reported an adaptive learning rate at each $j$ step  of $\epsilon/j^\alpha$ with $\alpha=1/2$ was enough to avoid overshooting of local minima/saddle points.
We found this approach not suitable for our benchmarking purposes as it required an excessive fine tuning of $\alpha$ for different types of seeds.
We choose instead the following protocol: every time the cost $C(A') \geq C(A)$, where $A$ the current solution and $A'$ the new solution, 
the learning rate is reduced by a factor $10$ and $C(A')$ is recomputed until either $C(A')< C(A)$ or $\epsilon$ is reduced a maximum number of times that we set to $10$.
This stopping condition allows us to capture local minima and detect when the first order approximation of the gradient update starts to fail.
At the start of each run, we sample random pulse seeds $u$ from Gaussian distributions (centered at zero, 1, and -1, with standard deviation 2) and uniform distributions (integers between -10 and 10, with a scale factor of 1 and 0.5).


\section*{PiQC robustness example}\label{sec:robustness}

\begin{figure}[!t]
\bc
\includegraphics[width=0.4\textwidth]{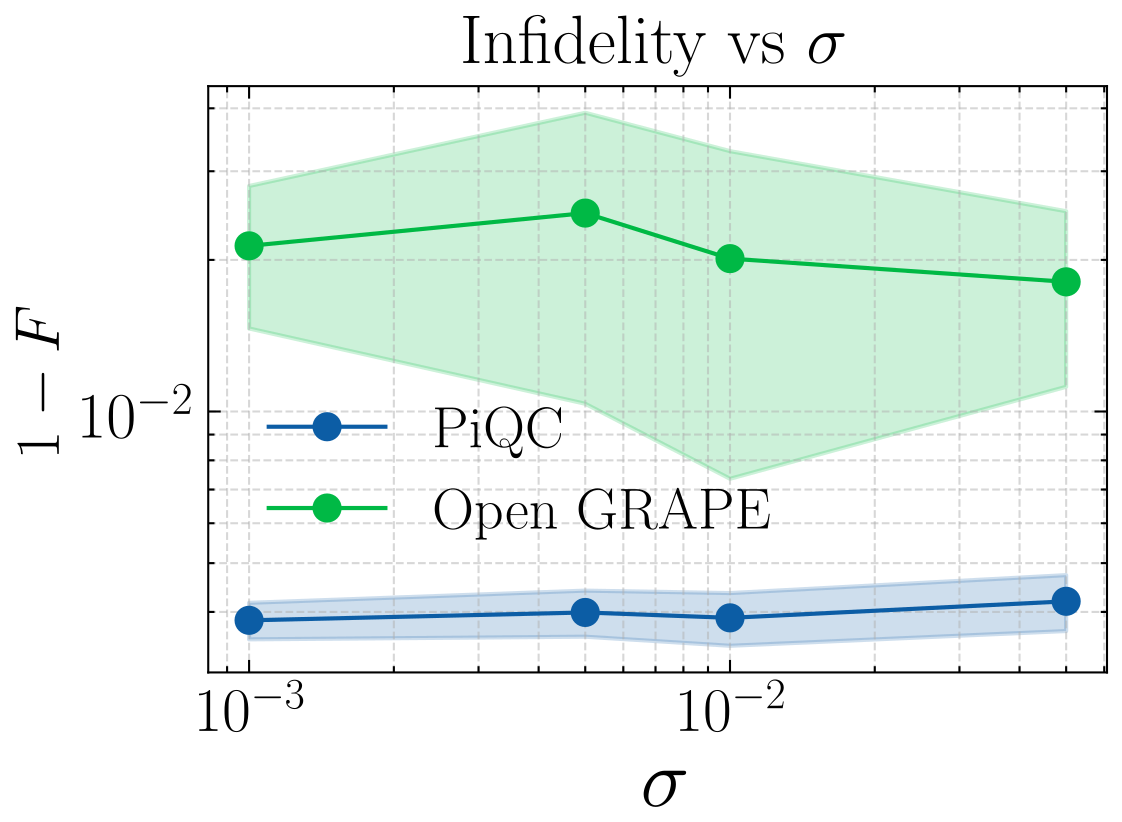}
\ec 
\caption{
Infidelity $1-F$ vs perturbation amplitude $\sigma$.
The parameters for PiQC are: $\ntraj = 400$, $n_{IS} = 500$, $R=1$, $Q=100$, $T=1$, $K=50$ and $100$ time steps.
The parameters for Open GRAPE are the same as PiQC to match the control problem definition, i.e. $R, Q, T, K$.
For each $\sigma$ we solve the control problem for each algorithm $10$ times and compute the mean and variance of the infidelities.
In the case of Open GRAPE the solutions are chosen based on best performance criteria: for each $\sigma$ we run the algorithm for 10 different control seeds and choose the one with highest fidelity.
The control seeds are normally distributed with standard deviation $5$.
The learning rate adapts according to Sec.~\ref{app:opengrape} with adaptive factor $0.5$.
The results indicate that, in this setting, PiQC attains mean infidelities that are approximately an order of magnitude lower than those obtained with Open GRAPE.
}
\label{fig:robustness}
\end{figure}

We show an experiment for the single qubit that tests PiQC robustness and algorithmic resilience under imperfect knowledge of the system model.
We consider the single qubit system of Section~\ref{sec:noisy-qubit} in the presence of dissipation, with Hamiltonian $H = H_0 + H_c(t)$, where the {\it true} drift Hamiltonian model is given by $H_0=0$ and $H_c(t)$ is the control Hamiltonian with transverse controls in $x$ and $y$.
We test the algorithm response under random perturbations of the drift Hamiltonian which encode our ignorance about the exact form of $H_0$.
For this, we consider random perturbations in time $\delta H_0(t)$ such that $H = \delta H_0(t) + H_c(t)$ such that the components of $\delta H_0(t)$ are sampled from a Gaussian distribution with a given standard deviation $\sigma$, and then taking the Hermitian part of the resulting matrix.
In consequence, the quantum trajectories will depend on the individual realizations of the Wiener increments $dW_t$ as well as $H_0(t)$.
The control problem consists of the preparation of the state $\ket{Y}$ starting from $\ket{X}$ with dissipation strength $D=0.005$.
After convergence, we use the control solutions in the true dynamics with $H_0=0$ and compute the average fidelity $F(\sigma)$ across trajectories.
We repeat this multiple times to extract some statistics and for a number of $\sigma$ values.
In parallel, we repeat the same exercise with Open Grape.
We introduce in the dynamics a time varying drift $\delta H_0(t)$ drawn from a Gaussian distribution with standard deviation $\sigma$ as before, and use the control solutions for the ground truth dynamics with $H_0=0$.
The results are shown in Fig.~\ref{fig:robustness} and demonstrate PiQC's enhanced robustness under Hamiltonian uncertainty, showing mean infidelity values one order of magnitude lower than Open GRAPE across the tested perturbation amplitudes.


\section*{Open NMR control theory} \label{app:nmr}

\subsection*{Rotating frame transformation}\label{sec:rotframe}
Define the unitary transformation
\bal{\label{eq:U}
U = \exp(i H_Z t) \qquad H_Z= \pi \sum_j D_j  \sigma^z_j
}
It represents a transformation from the lab frame to a multiple rotating frame.
Given the state $\rho$ in the lab frame, define the transformed state to the rotating frame by $\rho' = U \rho U^\dagger$.
Assume $\rho$ follows the generic Lindblad equation
\bal{
\dot \rho = -i[H_Z + H_I + H_c, \rho] + \cD[D, \cC] \rho
}
where, for simplicity, $D$ is a real scalar, $H_I$ and $H_c$ are defined as in~\cref{eq:nmr_gen}, and the $\cC = \{C_a\}_{a=1}^{n_c}$ are arbitrary dissipators.
\begin{lemm}
The transformed state $\rho'$ follows the Lindblad equation
\bal{\label{eq:nmr_rot_gen}
\dot \rho' = -i[H_I + H_c', \rho'] + \cD[D, \cC'] \rho'
}
where
\bal{\label{eq:H'c}
H_c' = \sum_i u_{ix}' \sigma^x_i + u_{iy}' \sigma^y_i \qquad C'_a = U C_a U^\dagger
}
and the controls in the rotating frame, $u'_i = (u'_{ix}, u'_{iy})^T$, are related to those in the lab frame, $u_i = (u_{ix}, u_{iy})^T$, by a real rotation
\bal{\label{eq:control_trans}
u'_i = S_i(t) u_i \qquad S_i(t) = \begin{pmatrix}
						\cos(2 \pi D_i t)  &  \sin(2 \pi D_i t)  \\
                                                -  \sin(2 \pi D_i t)  & \cos(2 \pi D_i t)
                                                \end{pmatrix}
                                                }
\end{lemm}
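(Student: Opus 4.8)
The plan is to differentiate the defining relation $\rho' = U\rho U^\dagger$ directly and substitute the lab-frame Lindblad equation. Since $U = \exp(iH_Z t)$ and $H_Z$ commutes with itself, the time derivative is clean, $\dot U = iH_Z U$ and $\dot{U}^\dagger = -iU^\dagger H_Z$. Applying the product rule I would obtain
\[
\dot\rho' = iH_Z\rho' - i\rho' H_Z + U\dot\rho\, U^\dagger = i[H_Z,\rho'] + U\dot\rho\, U^\dagger,
\]
so that the frame change produces one extra commutator $i[H_Z,\rho']$, which I expect to cancel against a matching term generated by $\dot\rho$.

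Next I would evaluate $U\dot\rho\,U^\dagger$, using that conjugation by a unitary preserves commutators and products: $U[A,\rho]U^\dagger = [UAU^\dagger,\rho']$, and each dissipative term transforms covariantly, $U(J_a\rho J_b^\dagger)U^\dagger = J'_a\rho'(J'_b)^\dagger$ with $J'_a = UJ_aU^\dagger$, so the whole dissipator maps to $\cD[D,J']\rho'$. The remaining work is to transport $H_Z+H_I+H_c$. Both $H_Z$ and $H_I$ are built solely from $\sigma^z$ operators and hence commute with $U$, so they are invariant; in particular the transported $H_Z$ cancels the $i[H_Z,\rho']$ term produced by the differentiation, which is exactly the mechanism that eliminates the drift in the rotating frame.

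The key computation is the conjugation of the control Hamiltonian $H_c = \sum_i u_{ix}\sigma^x_i + u_{iy}\sigma^y_i$. Writing $U$ as a product over qubits, only the factor $\exp(i\pi\nu_i t\,\sigma^z_i)$ acts nontrivially on qubit $i$, and the standard Pauli rotation identity gives $U\sigma^x_i U^\dagger = \cos(2\pi\nu_i t)\sigma^x_i - \sin(2\pi\nu_i t)\sigma^y_i$ and $U\sigma^y_i U^\dagger = \sin(2\pi\nu_i t)\sigma^x_i + \cos(2\pi\nu_i t)\sigma^y_i$. Substituting these and collecting the coefficients of $\sigma^x_i$ and $\sigma^y_i$ I would read off precisely the rotation $u'_i = S_i(t)u_i$ with the claimed matrix $S_i(t)$, so that $UH_cU^\dagger = H_c'$. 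Assembling the pieces then yields $\dot\rho' = -i[H_I+H_c',\rho'] + \cD[D,J']\rho'$, which is the assertion.

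I would expect the main obstacle to be bookkeeping rather than conceptual: keeping the signs and the factor of two in the Pauli rotation consistent so that $S_i$ comes out with the stated orientation, and confirming that $H_I$ is genuinely invariant. The latter relies on the specific $\sigma^z_i\sigma^z_j$ coupling structure and would fail for a more general coupling Hamiltonian, so it is the step I would check most carefully. No subtlety arises from the dissipator, since $D$ is taken scalar and conjugation passes through it term by term.
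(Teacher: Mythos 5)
Your proof is correct and follows essentially the same route as the paper's: differentiate $\rho' = U\rho U^\dagger$, let the $i[H_Z,\rho']$ term from $\dot U$ cancel the transported drift, conjugate the dissipator covariantly, and apply the standard Pauli rotation identities to $H_c$ to read off $S_i(t)$. Your explicit remark that the cancellation and the invariance of $H_I$ both hinge on the $\sigma^z$-only structure is a point the paper leaves implicit, but the argument is the same.
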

\begin{proof}
By differentiating $\rho'$ w.r.t. time we obtain
\baln{
\dot \rho' &= \dot U \rho U^\dagger + U \dot \rho U^\dagger + U \rho \dot U^\dagger \\
		&= -i[H_I + H_c', \rho'] + \cD[D, \cC'] \rho' \\
}

where $H_c' = U H_c U^\dagger$ and $C_a' = U C_a U^\dagger$.

Next, compute
\baln{
H'_c = U H_c U^\dagger = \sum_i u_{ix} (U \sigma^x_i U^\dagger + u_{iy} U \sigma^y_i U^\dagger)
}
For each spin $i$ we obtain
\baln{
U \sigma^x_i U^\dagger &= e^{i\pi D_i t \sigma^z_i} \sigma^x_i e^{-i\pi D_i t \sigma^z_i} \\
					&= \cos(2 \pi D_i t) \sigma^x_i - \sin(2 \pi D_i t) \sigma^y_i
}
and
\baln{
U \sigma^y_i U^\dagger &= e^{i\pi D_i t \sigma^z_i} \sigma^y_i e^{-i\pi D_i t \sigma^z_i} \\
					&= \cos(2 \pi D_i t) \sigma^y_i + \sin(2 \pi D_i t) \sigma^x_i
}
Then, we can write $H_c'$ as
\baln{
H_c' = \sum_i u_{ix}' \sigma^x_i + u_{iy}' \sigma^y_i
}
where
\bal{\label{eq:rot_control_eqs}
u_{ix}' &=  \cos(2 \pi D_i t) u_{ix} + \sin(2 \pi D_i t)  u_{iy} \nonumber \\
u_{iy}' &=  -  \sin(2 \pi D_i t)  u_{ix} + \cos(2 \pi D_i t) u_{iy} 
}
By defining 
\bal{
S_i(t) = \begin{pmatrix}
			\cos(2 \pi D_i t)  &  \sin(2 \pi D_i t)  \\
                           -  \sin(2 \pi D_i t)  & \cos(2 \pi D_i t)
                             \end{pmatrix}
}
$u'_i = (u'_{ix}, u'_{iy})^T$ and $u_i = (u_{ix}, u_{iy})^T$, we rewrite \cref{eq:rot_control_eqs} as $u'_i = S_i(t) u_i$ and the Lemma is proved.

\end{proof}

Assume the NMR system is in interaction with a electromagnetic field bath, which we encode in the dissipators $C_a = \sigma^+_i, \sigma^-_i (i=1, \ldots, n)$.

\begin{lemm}
The dissipator in \cref{eq:nmr_gen} is invariant under the action of $U$, i.e. $\cD[D, \cC'] = \cD[D, \cC]$.
\end{lemm}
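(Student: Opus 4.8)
The plan is to exploit the fact that the rotating-frame generator $U = \exp(iH_Z t)$ with $H_Z = \pi\sum_j \nu_j \sigma^z_j$ acts on each raising/lowering operator merely as a multiplicative phase, and that the NMR noise matrix is diagonal, so that these phases cancel term by term in the dissipator. I would therefore begin by making explicit the only structural assumption that matters here, namely that $D$ is scalar, $D_{ab} = D\,\delta_{ab}$, so that $\cD[D, C]\rho = D\sum_a\bigl(C_a\rho C_a^\dagger - \tfrac{1}{2}\{C_a^\dagger C_a, \rho\}\bigr)$ is a diagonal sum over the $2n$ dissipators $C_a \in \{\sigma^+_i, \sigma^-_i\}$.

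First I would compute the adjoint action of $U$ on a single dissipator. Since $U$ factorizes over spins and $H_Z$ is diagonal, the only relevant factor is $e^{i\pi\nu_i t\,\sigma^z_i}\,\sigma^\pm_i\, e^{-i\pi\nu_i t\,\sigma^z_i}$. Using the commutation relations $[\sigma^z_i,\sigma^\pm_i] = \pm 2\,\sigma^\pm_i$ and summing the resulting adjoint series, I obtain
\begin{equation*}
U\,\sigma^\pm_i\,U^\dagger = e^{\pm 2\pi i\,\nu_i t}\,\sigma^\pm_i .
\end{equation*}
The key observation is that each transformed dissipator equals the original one up to a pure phase, $C_a' = e^{i\alpha_a(t)}C_a$ with $\alpha_a(t) = \pm 2\pi\nu_i t$, and that there is no mixing between distinct spins nor between the raising and lowering sectors.

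Next I would substitute $C_a' = e^{i\alpha_a}C_a$ into the dissipator. Because $D$ is proportional to the identity, only the diagonal terms survive, and the phases cancel in each: the jump term gives $C_a'\rho C_a'^\dagger = e^{i\alpha_a}C_a\rho C_a^\dagger e^{-i\alpha_a} = C_a\rho C_a^\dagger$, and the anticommutator term gives $C_a'^\dagger C_a' = C_a^\dagger C_a$. Summing over $a$ then yields $\cD[D, C'] = \cD[D, C]$, which is the claim. This also shows that the transformed equation \cref{eq:nmr_rot_gen}, which carries the dissipator $\cD[D, C']$, in fact retains exactly the original dissipator, so the rotating-frame Lindblad equation keeps the same damping structure.

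The point requiring care — rather than a genuine obstacle — is precisely the diagonality of $D$. Were $D$ to couple distinct dissipators through off-diagonal entries, the cross term $D_{ab}\,e^{i(\alpha_a - \alpha_b)}\,C_a\rho C_b^\dagger$ would acquire a non-trivial time-dependent phase $\alpha_a - \alpha_b$, and the invariance would generically fail. I would therefore flag the assumption $D\propto I$ at the outset, after which the phase-cancellation argument is immediate and the lemma follows.
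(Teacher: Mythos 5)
Your proof is correct and follows essentially the same route as the paper's: compute $U\sigma^\pm_i U^\dagger = e^{\pm 2\pi i\,\nu_i t}\,\sigma^\pm_i$ from $[\sigma^z_i,\sigma^\pm_i]=\pm 2\sigma^\pm_i$ and observe that the resulting pure phases cancel in the dissipator. Your explicit remark that the cancellation hinges on the diagonality of $D$ is a worthwhile clarification that the paper glosses over (it simply asserts that "the dissipator is invariant under phase transformations"), since off-diagonal entries of $D$ coupling dissipators carrying different phases would indeed break the invariance.
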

\begin{proof}
Apply the unitary transformation $U$ to the Lindblad operators.
\baln{
U \sigma^+_i U^\dagger &= e^{i\pi D_i t \sigma^z_i} \sigma^+_i e^{-i\pi D_i t \sigma^z_i} \\
					&= e^{i 2\pi D_i t} \sigma^+_i
}
where we used that $[\sigma^z_i, \sigma^+_i] = 2 \sigma^+_i$.
Analogously for $\sigma^-_i$, we have $U \sigma^-_i U^\dagger = e^{-i 2\pi D_i t } \sigma^-_i$.
Therefore, the Lindblad operators $C'_a$ in the rotating frame are the same as in the lab frame up to phase factors.
As the dissipator is invariant under phase transformations, the Lemma is thus proven.
\end{proof}


\subsection*{Control problem equivalence}
The main purpose of the unitary transformation $U$ is to factor out potentially fast frequencies $D$ from the definition of the control problem that are not necessary for computing the optimal control and may even hinder the efficient search for a solution.
By defining and solving the control problem in the rotating frame, we can uniquely map the optimal solution back to the lab frame through the action of the unitary transformation in control space.
In this section we describe a general invariance present in the control problem under certain family of transformations $\cT$.

Assume the Lindblad equation for an open NMR system
\bal{\label{eq:lind}
\dot{\rho} = -i[H,\rho] + \cD[D, \cC] \rho
}
with $H = H_Z + H_I + H_c$ and dissipator
$$
\cD[D, \cC] \rho = \sum_j D_{jab} \lpar C_{ja}\rho C_{jb}^\dagger -\frac{1}{2}\{ C_{jb}^\dagger C_{ja}, \rho \} \rpar\,.
$$
where index $j$ labels qubit $j$, and the dissipation operators $C_{ja}$ act on qubit $j$.
The noise matrix $D$ is in block diagonal form $D := \bigoplus_j D_j$ with $(D_j)_{ab} = D_{jab}$.

Define the cost
\bal{\label{eq:costx}
C[x] = -\frac{Q}{2}\tr{\rho_T \rho_{tar}} + \frac{1}{2} \sum_j \int_0^T {u_j}^T R_j u_j dt 
}
where the tuple $x := (\rho_T, \rho_{tar}, u, R)$ with $R := \bigoplus_j R_j$ block diagonal.

Consider $U$ as in \cref{eq:U}, and define the transformation $\cT = (U, S)$ to the rotating frame by it's action on pair $(\rho, u)$, with $u = u_i (i=1, \ldots, n)$, as $\cT (\rho, u) = (U\rho U^\dagger, S u)$, and it's action on pair $(D, R)$ as $\cT(D, R) = (S^T D S, S^T R S)$.

Under $\cT$, the dissipators $C_{ja}$ transform as
\bal{\label{eq:Ctransf}
U C_{ja} U^\dagger = e^{i \alpha_j} \sum_b S_{jab} C_{jb}
}
where $S_j (j=1, \ldots, n)$ is an orthogonal matrix with $S_j S_j^T = 1$, $\alpha_j$ arbitrary real numbers, and define $S = \bigoplus_j S_j$.

\begin{lemm}
The dissipator $\cD[D, \cC]$ is invariant under $\cT$, i.e. $\cD[D', \cC'] = \cD[D, \cC]$.
\end{lemm}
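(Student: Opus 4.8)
The plan is to reduce the claim to the general dissipator invariance already established in Section~\ref{sec:invariance}, Eq.~\cref{eq:dissipation}, applied one qubit block at a time. Two structural facts make this reduction clean. First, since $D=\bigoplus_j D_j$ is block diagonal and each $C_{ja}$ acts only on qubit $j$, the dissipator splits as $\cD[D,C]=\sum_j \cD[D_j,C_j]$, with each summand of the standard single-block form; it therefore suffices to prove $\cD[D'_j,C'_j]=\cD[D_j,C_j]$ for a fixed $j$ and then sum over $j$. Second, the per-qubit transformation \cref{eq:Ctransf} is exactly of the type \cref{eq:invariant}, up to an overall phase $e^{i\alpha_j}$ that will cancel.

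For a fixed $j$, I would recast \cref{eq:Ctransf} into the language of \cref{eq:invariant}: writing $U C_{ja} U^\dagger=\sum_b C_{jb}\,(A_j)_{ba}$ identifies the change-of-basis matrix as $A_j=e^{i\alpha_j}S_j^T$. Because $S_j$ is real orthogonal, $S_jS_j^T=1$, and $e^{i\alpha_j}$ is a pure phase, the induced noise-matrix relation $D_j=A_j\,D'_j\,A_j^\dagger$ collapses to a conjugation of $D'_j$ by $S_j$ in which the phase disappears entirely — it enters once in $C'_{ja}$ and once, conjugated, in $(C'_{jb})^\dagger$, and as $e^{-i\alpha_j}e^{i\alpha_j}$ in the anticommutator term. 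This is precisely the block-orthogonal transformation encoded in $\cT$, so the hypothesis of \cref{eq:invariant} is met and \cref{eq:dissipation} yields the desired block identity. Equivalently, and as a direct cross-check, I would substitute $C'_{ja}=e^{i\alpha_j}\sum_c (S_j)_{ac}C_{jc}$ together with the transformed $D'_j$ into the dissipator, cancel the phases, and use orthogonality $\sum_a (S_j)_{ac}(S_j)_{ap}=\delta_{cp}$ to contract the two $S_j$ factors arising from $C'_{ja}$ and $(C'_{jb})^\dagger$ into the identity, recovering $\sum_{cd}(D_j)_{cd}\,C_{jc}\rho C_{jd}^\dagger$.

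The hard part is purely bookkeeping: keeping the transpose conventions straight so that the orthogonality relation invoked ($S_jS_j^T=1$) is exactly the one that collapses the two $S_j$ factors into a Kronecker delta, and verifying that a single contraction simultaneously fixes both the ``jump'' term $C_{ja}\rho C_{jb}^\dagger$ and the anticommutator $\{C_{jb}^\dagger C_{ja},\rho\}$ (they carry the identical index structure, so consistency is automatic once the first term is checked). No analytic difficulty arises beyond this; the phase cancellation and the orthogonality $S_jS_j^T=1$ carry the whole argument.
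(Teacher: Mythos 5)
Your proposal is correct and is essentially the paper's own (very terse) proof by direct substitution: the paper simply states that the result follows from $\mathcal{T}D = SDS^{T}$ together with the transformation rule for the $C_{ja}$, which is exactly your orthogonality-contraction cross-check, while your main route additionally makes explicit the reduction to the general invariance \eqref{eq:dissipation} via the identification $A_j=e^{i\alpha_j}S_j^{T}$ and the blockwise splitting of the dissipator. One small remark: your derived relation $D'_j=S_jD_jS_j^{T}$ agrees with the $\mathcal{T}D=SDS^{T}$ used in the paper's proof but not with the $S^{T}DS$ appearing in the paper's definition of $\mathcal{T}$ — a transpose inconsistency in the paper itself that is immaterial in all its applications, where each block $D_j$ is a multiple of the identity.
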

\begin{proof}
The results follows by applying $\cT D = S D S^T$ and definition \cref{eq:Ctransf} to the dissipators $C_{ja}$.
\end{proof}

\begin{thrm} \label{th:invC}
The cost $C[x]$ in \cref{eq:costx} is invariant under $\cT$ and the optimal control solutions $u'^*$ and $u^*$ in the rotating and lab frames, respectively, are related by $u'^* = S u^*$.
\end{thrm}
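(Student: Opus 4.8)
The plan is to verify the invariance of $C[x]$ one summand at a time and then to lift it, together with the dynamical Lemmas established above, to an equivalence between the lab-frame and rotating-frame optimization problems. First I would treat the end-cost term. Under $\cT$ both $\rho_T$ and $\rho_{tar}$ are conjugated by the same unitary $U$, so cyclicity of the trace gives $\tr{(U\rho_T U^\dagger)(U\rho_{tar} U^\dagger)} = \tr{\rho_T \rho_{tar}}$, and the first summand of $C[x]$ in \cref{eq:costx} is left untouched.

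For the control-energy term I would substitute $u'_i = S_i u_i$ from \cref{eq:control_trans} together with the transformed weight $R'_j$. At each fixed time the integrand reads $(u'_j)^T R'_j u'_j = u_j^T S_j^T R'_j S_j u_j$, and since every $S_j(t)$ is a planar rotation it is orthogonal, $S_j^T S_j = S_j S_j^T = I$. Hence, provided the weight transforms congruently, $R'_j = S_j R_j S_j^T$, the two copies of $S_j$ cancel against the transformation of the weight and the quadratic form collapses to $u_j^T R_j u_j$ pointwise in time; integrating and summing over $j$ then yields $C[\cT x] = C[x]$. This cancellation is precisely what fixes the direction in which $R$ must transform; in the physically relevant isotropic case $R_j \propto I$ the identity already follows from $S_j^T S_j = I$ alone.

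It remains to pass from cost invariance to the statement about minimizers, and here the key observation is that $\cT$ is a bijection on control problems: $U$ is unitary and each $S_j$ is invertible, so $(\rho, u) \mapsto (U\rho U^\dagger, S u)$ is invertible. The preceding Lemmas show that this bijection carries feasible trajectories of the lab-frame Lindblad equation \cref{eq:lind} to feasible trajectories of the rotating-frame equation: the control Hamiltonian transforms with $u'_i = S_i u_i$, while the dissipator $\cD[D, C]$ is left invariant by \cref{eq:Ctransf}. Thus to each lab-frame control $u$ with trajectory $\rho(t)$ there corresponds exactly one rotating-frame control $u' = S u$ with trajectory $U\rho(t) U^\dagger$, and their costs agree by the first two paragraphs. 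Minimization over $u$ is therefore in one-to-one, cost-preserving correspondence with minimization over $u'$, so the two minimizers are related by the same map, $u'^* = S u^*$, which is the assertion of the theorem.

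The main obstacle is not the cost algebra, since both terms reduce to trace cyclicity and orthogonality, but rather the clean formalization of this equivalence, namely checking that $\cT$ keeps the problem inside the admissible class. One must verify that $D' = S D S^T$ stays positive semidefinite, which holds because congruence by an orthogonal matrix preserves the spectrum, and that the path-integral condition $R\tilde D = \lambda I$ is preserved, which follows from $R'\tilde D' = (S R S^T)(S\tilde D S^T) = S(R\tilde D)S^T = \lambda S S^T = \lambda I$. Once the two constraint sets are established to be in exact bijection with matching objective values, the identity $u'^* = S u^*$ follows at once.
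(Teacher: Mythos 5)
Your proof is correct and follows essentially the same route as the paper's: cyclicity of the trace for the fidelity term, orthogonality of each $S_j$ against the congruently transformed weight for the control-energy term, and invertibility of $\mathcal{T}$ to transport the minimizer from one frame to the other. Your additional remarks --- that the preceding lemmas guarantee the Lindblad trajectories are carried over so the two feasible sets are in bijection, that $D' = S D S^T$ remains positive semidefinite, and that the path-integral condition $R\tilde{D}=\lambda I$ is preserved --- are correct and simply make explicit points the paper's proof leaves implicit.
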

\begin{proof}
Write the action of $\cT$ on $C$ as
\baln{
(\cT C)[x] &= C[\cT x] \\
		  &= -\frac{Q}{2}\tr{\rho'_T \rho'_{tar}} + \frac{1}{2} \sum_j \int_0^T {u'_j}^T R'_j u'_j dt \\
		    &=-\frac{Q}{2}\tr{U\rho_TU^\dagger U \rho_{tar}U^\dagger} + \frac{1}{2}\sum_j \int_0^T (S u_j)^T SR_j S^T (S u_j) dt \\
		       &= C[x]
}
where in the last line we used the unitarity/orthogonality of $U$ and $S_i$, respectively.
Thus, $C[x] = C[x']$ and the invariance of $C$ follows.

Next, fix $\rho_{tar}$ and $R$, and consider the optimal solution in the lab frame $u^* = \text{argmin}_u \, C[u]$, which satisfies
$
C[u^*] \leq C[u] \,\, \forall u
$
By invariance of $C$ under $\cT$, and given that $\cT$ is invertible, the transformation $S u^*$ satisfies
$
C[Su^*] \leq C[u'] \,\, \forall u'
$
where $u' = Su$.
Thus, the optimal solution in the rotating frame $u'^*$ corresponds to a rotation of the optimal solution in the lab frame, $u'^* = Su^*$.
\end{proof}

Provided that the dissipators $\cC'$ in the rotating frame can be transformed to anti-Hermitian operators, we can solve the control problem in the rotating frame using the PiQC algorithm.
By virtue of Theorem \ref{th:invC}, The optimal control $u'^*$ in the rotating frame can be mapped back to the lab frame.
Define $\cL^{rot},\, \cL^{lab}$ as the Lindbladian operators in the rotating and lab frames, respectively.
The following diagram illustrates the end-to-end workflow of the transformation.
\begin{figure}[H]
\begin{center}
\includegraphics[width=0.4\textwidth]{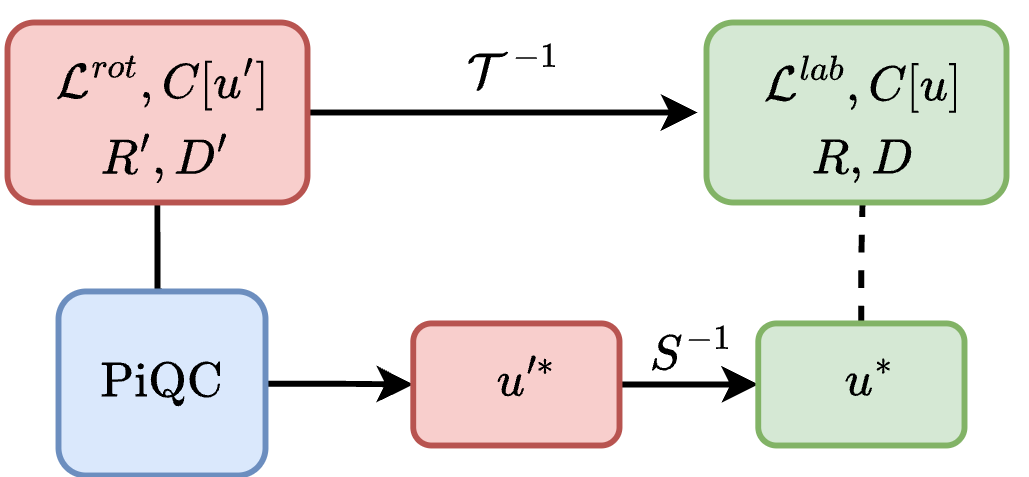}
    \end{center}
\end{figure}

\begin{example}
A particular example of $\cT$ is the case in which the dissipators $C_a = \sigma^+_i, \sigma^-_i$.
Then, $S$ corresponds to the rotation defined in \cref{eq:control_trans} and we can use the transformation $A$ defined in \cref{eq:one-qubit-invariance} to transform to anti-Hermitian operators and simulate the problem using PiQC.
\end{example}

\begin{example}
When the target state in the lab frame is a GHZ state, i.e. $\phi = \ket{GHZ}$,
the corresponding state in the rotating frame becomes $\phi' = U_T \ket{GHZ} = \frac{1}{\sqrt{2}} (e^{i2\pi \omega T} \ket{0}^n + e^{-i2\pi \omega T} \ket{1}^n)$ with $\omega = \frac{1}{2}\sum_j D_j$.
In particular, note that $\phi' = \phi$ whenever $\omega T \in \integer$.
\end{example}


\section*{A 10 qubits example} \label{sec:10qubits}
We present a complementary example applying PiQC to a 10-qubit toy model example to test the applicability of our algorithm to larger systems.

Consider a 1-D spin chain of $n$ qubits with single and pair-wise controls.
Since this is a toy model designed to test the applicability of our algorithm for a larger number of qubits, we do not assert its feasibility in a lab setting, but rather emphasize the simplicity of defining such control problems within our framework.
For simplicity, we set $H_0 = 0$.
The control Hamiltonian is
\bal{\label{eq:all_xy}
H_c = \sum_{i=1}^n \sum_{a=x,y}  u_{ia} \sigma_i^a +\sum_{i=1}^{n-1} \sum_{a,b=x,y} u_{iab} \sigma_i^a \sigma_{i+1}^b
}
where the controls $u_{ia}$ act on spin $i$, and controls $u_{iab}$ on the pair $i,\,i+1$.
Note, that $H_c$ is an extension of typical NMR control Hamiltonians~\cite{vandersypen2004nmr} with the addition of pair-wise controls.

The open dynamics is described by the Lindblad equation
\bal{\label{eq:nqubit_lindblad}
\dot{\rho}&= -i[H,\rho] +D_1 \sum_{i=1}^n \left( \sigma_i^- \rho \sigma_i^++\sigma^+_i \rho \sigma_i^--\rho\right) + D_2\sum_{i=1}^{n-1}\sum_{v,w=\pm} \left(\sigma_i^v\sigma_{i+1}^w\rho\sigma_i^{-v}\sigma_{i+1}^{-w} -\rho\right)
}
with dissipation $\sigma^{\pm}$ of equal strength $D_1$ acting on individual spins and with strength $D_2$ on neighboring pairs of spins.

The control cost is defined as
\bal{\label{eq:nqubit_cost}
C=\braket{-\frac{Q}{2}\cF(\psi_T)+ \frac{1}{2}\int_0^T \lpar R_1\sum_{i=1}^n \sum_{a=x,y}u_{ia}^2  + R_2\sum_{i=1}^{n-1}\sum_{a,b=x,y} u^2_{iab} \rpar dt}
}

Analogous to the previous section, we perform the transformation \cref{eq:one-qubit-invariance} to map non-Hermitian operators into anti-Hermitian.
We transform the operators $C'_{ia}=\sigma_i^a, a=\pm$ and $C'_{iab}=\sigma_i^a\sigma_{i+1}^b, a,b=\pm$ in Eq.~\cref{eq:nqubit_lindblad} into anti-Hermitian operators
\bal{
C_{ib} = C'_{ia} A_{ab}   \qquad C_{icd} = C'_{iab} A_{ab cd} \qquad \text{for all $i$}
}
with $A_{ab cd} := i A_{a c} A_{b d}$.
This leads to the following expression for the transformed operators
\bal{
C_{ib} = - i \sigma_i^b   \qquad C_{icd} = -i \sigma_{i}^{c} \sigma_{i+1}^{d} \qquad \text{for all $i$\,\, and \,\,$b, c, d = x, y$}\,.
}
The corresponding unraveling is
\bal{\label{eq:nqubit_unraveling}
d\psi =& -i \lpar \sum_{i=1}^n\sum_{a=x,y} (u_{ia} dt +d\tilde{W}_{ia} ) \sigma_i^a+\sum_{i=1}^{n-1} \sum_{a,b=x,y} ( u_{iab} dt +dW_{iab}) \sigma_{i}^a \sigma_{i+1}^b  \rpar \psi dt \nonumber \\
&-\left(n D_1+2(n-1) D_2\right)\psi dt
}
with $dW_{ia} dW_{jb}=\delta_{ij}\delta_{ab}D_1 dt$ for $i,j=1,\dots n$ and $a, b=x, y$, and $dW_{iab} dW_{jcd}= \delta_{ij} \delta_{ac} \delta_{bd} D_2 dt$ for $i,j= 1, \ldots, n-1$ and $a,b,c,d=x,y$.
Equations \cref{eq:nqubit_cost} and \cref{eq:nqubit_unraveling} define a path integral control problem when 
\bal{\label{eq:conditionPI2}
\lambda = D_1 R_1 = D_2 R_2
}
The initial state is $\psi_0 = \ket{0}^n$.
We choose the target state $\phi$ to be a Greenberger–Horne–Zeilinger (GHZ) state, which is defined as $\ket{GHZ_n} := \frac{1}{\sqrt{2}} \lpar \ket{0}^n + \ket{1}^n \rpar$ which represents a maximally entangled state in the global entanglement (or Meyer-Wallach) measure for multipartite systems~\cite{meyer2002global}.
We set $R_1=R_2=R/{n_c}$ with $n_c = 2n + 4(n-1)$ the total number of controls.
This normalization defines a fluence cost per unit control.
The PI condition~\cref{eq:conditionPI2} then implies that $D_2=D_1 \equiv D$.
We take as free parameters $R$ and $D_1$.
We run the algorithm for $n=10$ qubits.
The results are presented in Fig. \ref{fig:10-qubit-ghz}, highlighting the potential of PiQC for controlling large quantum systems.
\begin{figure}[!t]
    \begin{center}
        \includegraphics[width=0.95\textwidth]{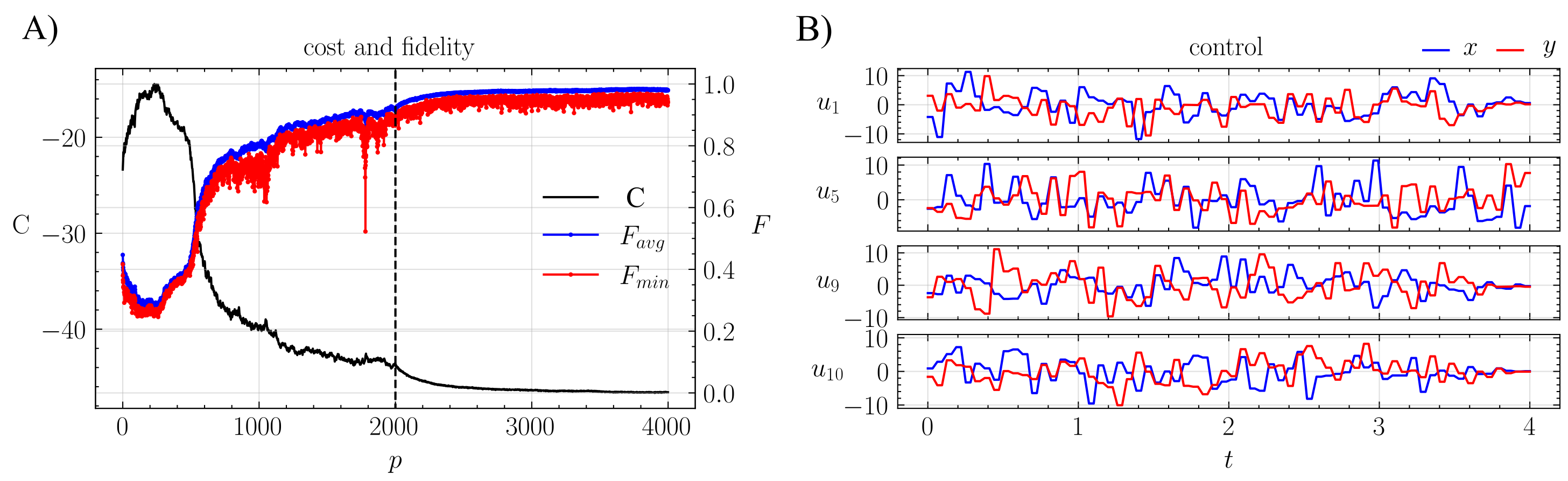}
    \end{center}
    \caption{Control of a $n=10$ qubit 1-D spin chain.
    The IS training (panel A) has two parts (marked by the vertical dashed line), where in the first part we set the IS window $w=1$ (no smoothing), and in the second part we change it to $w=10$.
This is to speedup convergence in the first part, and later to improve the statistics.
The average fidelity reached $F_{avg} = 0.982 \pm 0.001$.
Runtime $\sim 6.8$hr on a desktop computer.
In panel (B) we illustrate the control solutions for qubits 1, 5, 9 and 10.
The $x$/$y$ components of each control are marked by blue and red colors, respectively.
The parameters are $K=64$, $T=4$, $R$=0.1, $D=0.001$, $Q$=100, $n_{IS}$=4000, and $\ntraj$=100.
}
\label{fig:10-qubit-ghz}
\end{figure}
It is worth noting that a direct implementation of this problem using Open GRAPE would necessitate the computation and storage of matrix exponentials of dimensions $N^2 \times N^2$ at each step $K$, which could become prohibitive in terms of memory requirements.
This challenge underscores the need for an efficient implementation of the algorithm (see, for instance, \cite{boutin2017opengrape, abdelhafez2019quantum}) and/or the use of specialized software.
In contrast, the PiQC algorithm can run efficiently on a standard desktop computer for this number of qubits.

\end{document}